\let\latexaddtocontents\addtocontents
\let\addtocontents\latexaddtocontents
\pgfplotsset{compat=1.3}
\definecolor{ForestGreen}{RGB}{34,139,34}
\definecolor{DarkRed}{RGB}{204,0,0}
\newcommand{\QAE}{\textrm{QAE}}
\newcommand{\QPE}{\textrm{QPE}}
\newcommand{\QAA}{\textrm{QAA}}
\newcommand{\algoAE}{\textsc{EigenvalueEstimation}}
\renewcommand{\tilde}{\widetilde}  
\renewcommand{\epsilon}{\varepsilon}  
\DeclareMathOperator{\Tr}{Tr}
\theoremstyle{plain}
\newtheorem{theorem}{Theorem}
\newtheorem{problem}[theorem]{Problem}
\newtheorem{lemma}[theorem]{Lemma}
\newtheorem{corollary}[theorem]{Corollary}
\begin{document}

\title{A square-root speedup for finding the smallest eigenvalue}
\author{Alex Kerzner}
    \email{alexkerzner1@gmail.com}
    \affiliation{softwareQ Inc., Waterloo, Ontario, N2L 0C7, Canada}
    \author{Vlad Gheorghiu}
    \affiliation{softwareQ Inc., Waterloo, Ontario, N2L 0C7, Canada}
    \affiliation{Institute for Quantum Computing, University of Waterloo, Waterloo, Ontario, N2L 3G1, Canada}
\author{Michele Mosca}
    \affiliation{softwareQ Inc., Waterloo, Ontario, N2L 0C7, Canada}
    \affiliation{Institute for Quantum Computing, University of Waterloo, Waterloo, Ontario, N2L 3G1, Canada}
    \affiliation{Perimeter Institute for Theoretical Physics, Waterloo, Ontario, N2L 2Y5, Canada}
\author{Thomas Guilbaud}
    \affiliation{Laboratory for Reactor Physics and Systems Behaviour, EPFL, Lausanne, Vaud, Switzerland}
    \affiliation{Transmutex SA, Vernier, Geneva, Switzerland}
\author{Federico Carminati}
    \affiliation{Transmutex SA, Vernier, Geneva, Switzerland}
\author{Fabio Fracas}
    \affiliation{Transmutex SA, Vernier, Geneva, Switzerland}
    \affiliation{University of Padua, Padua, Italy}
\author{Luca Dellantonio}
    \email{l.dellantonio@exeter.ac.uk}
    \affiliation{Institute for Quantum Computing, University of Waterloo, Waterloo, Ontario, N2L 3G1, Canada}
    \affiliation{Department of Physics and Astronomy, University of Exeter, Stocker Road, Exeter EX4 4QL, United Kingdom}

\begin{abstract}
We describe a quantum algorithm for finding the smallest eigenvalue of a Hermitian matrix. This algorithm combines Quantum Phase Estimation and Quantum Amplitude Estimation to achieve a quadratic speedup with respect to the best classical algorithm in terms of matrix dimensionality, i.e., $\widetilde{\mathcal{O}}(\sqrt{N}/\epsilon)$\footnote{
In this work $\tilde{O}$ ignores terms that are polylogarithmic in $N$ or $1/\epsilon$.
} black-box queries to an oracle encoding the matrix, where $N$ is the matrix dimension and $\epsilon$ is the desired precision. In contrast, the best classical algorithm for the same task requires $\Omega(N)\text{polylog}(1/\epsilon)$ queries. In addition, this algorithm allows the user to select any constant success probability. We also provide a similar algorithm with the same runtime that allows us to prepare a quantum state lying mostly in the matrix's low-energy subspace. We implement simulations of both algorithms
and demonstrate their application to problems in quantum chemistry and materials science.
\end{abstract}

\maketitle

\section{Introduction}

Finding the smallest eigenvalue of a Hermitian operator is a ubiquitous task in quantum computing, computer science, and the natural sciences more generally \cite{aspuru2005simulated,nielsen2002quantum}. The special case of $k$-local Hamiltonians with $k=O(1)$, which correspond to several quantum systems arising in nature, has been extensively studied in the quantum computing literature \cite{gharibian2015quantum,Haase2021resourceefficient,Paulson2021,Shlosberg2023adaptiveestimation,1311.3161}. Local Hamiltonians can also encode optimization problems in the sense that the ground state and ground state energy correspond to the minimizer and minimum value of a cost function of interest \cite{ShenVyalyiKitaev2003}. The power of ground state energy estimation for local Hamiltonians is reflected in the fact that a decision version of this problem is \textsc{QMA}-complete \cite{kempe2006complexity}. Although this means that quantum computers are unlikely to provide an exponential speedup for this task, we may hope for polynomial speedups, which can be crucial in practice. Moreover, we might ask whether such speedups can be retained when considering nonlocal Hamiltonians.

One well-known quantum algorithm for solving this problem is the quantum phase estimation (QPE) algorithm \cite{kitaev1995quantum}, which, given an eigenstate of a unitary operator, estimates its eigenvalue. In general, however, it is not clear whether the ground state can be efficiently prepared. One approach \cite{aspuru2005simulated} is to first prepare the ground state of a known Hamiltonian, then perform an adiabatic evolution to slowly evolve the initial state into the ground state of the target Hamiltonian. However, the time required for this adiabatic evolution is inversely proportional to the gap between the smallest eigenvalues of the Hamiltonian \cite{albash2018adiabatic}, and in general, this gap may vanish. Moreover, for general Hamiltonians, this gap is usually unknown a priori, making it impossible to assess the validity of the obtained result. 

Another approach consists in approximately preparing the ground state by solving a simpler related problem first. In Ref.~\cite{jaksch2003eigenvector}, the authors consider the case where the Hamiltonian corresponds to a discretization of a partial differential operator, and an approximation of the ground state can be efficiently prepared by first considering that same PDE on a coarse grid. However, the simpler related problem may not always exist, and in general, the solution to this related problem may not be sufficiently similar to the solution to the actual problem.

Other quantum approaches include variational algorithms \cite{cerezo2021variational,chan2023hybrid,Ferguson2021,Bespalova2021}. These prepare a circuit $C(\vec{\theta})$, where $\vec{\theta}$ is a set of parameters with which $C$ varies continuously. The smallest eigenvalue of an $n$-qubit Hamiltonian $H$ is (hopefully) the minimum value of $\braket{0^n|C(\vec{\theta})^\dagger H C(\vec{\theta})|0^n}$ ($\ket{0^n}\equiv\ket{0}^{\otimes n}$), and so by computing this value and repeatedly varying $\vec{\theta}$ according to a classical minimization strategy, we might hope to find the global minimum. Variational algorithms, however, have few performance guarantees, as their cost landscapes may suffer from barren plateaus and local minima \cite{Uvarov2020}.

In this paper, we describe a quantum algorithm that has (1) a guaranteed speedup over the best possible classical algorithm with respect to the dimension of the considered matrix; and (2) a success probability that the user can select. This algorithm makes use only of two basic subroutines: QPE and quantum amplitude estimation (QAE) \cite{Brassard_2002} and requires $\widetilde{O}(\sqrt{N}/\epsilon)$ black-box queries to the matrix, where $N$ is the matrix dimension and $\epsilon$ the desired precision. For fixed $\epsilon$, this is an improvement over classical algorithms, which require $\Omega(N)\text{polylog}(1/\epsilon)$ queries. 

Let $\lambda_0 \leq \ldots \leq \lambda_{N-1}$ be the eigenvalues of our Hermitian matrix and $\ket{\psi_0},\ldots,\ket{\psi_{N-1}}$ the associated eigenvectors. A simplified version of the algorithm is shown in \Cref{fig:main_figure}. We define a subroutine $\mathcal{A}$ (\Cref{fig:main_figure} (iii)), which first prepares $\frac{1}{\sqrt{N}}\sum_j \ket{\psi_j}\ket{\bar{\psi_j}}$ ($\bar{\cdot}$ denotes the element-wise complex conjugate) and then uses QPE to approximately prepare $\frac{1}{\sqrt{N}}\sum_j\ket{\lambda_j}\ket{\psi_j}\ket{\bar{\psi_j}}$, where $\lambda_j$ is encoded in the clock register employed for QPE. We then repeatedly use {\QAE} to infer $\lambda_0$ with a binary search (\Cref{fig:binary_search} (i)) in a manner similar to D\"{u}rr and H{\o}yer's quantum algorithm for minimum-finding \cite{durr1996quantum}. Once we have estimated $\lambda_0$, it is possible to use quantum amplitude amplification (QAA) \cite{Brassard_2002} to amplify the low-energy eigenstates and prepare a state lying mostly in the low-energy eigenspace.

\begin{figure*}

\centering
\includegraphics[width=\textwidth]{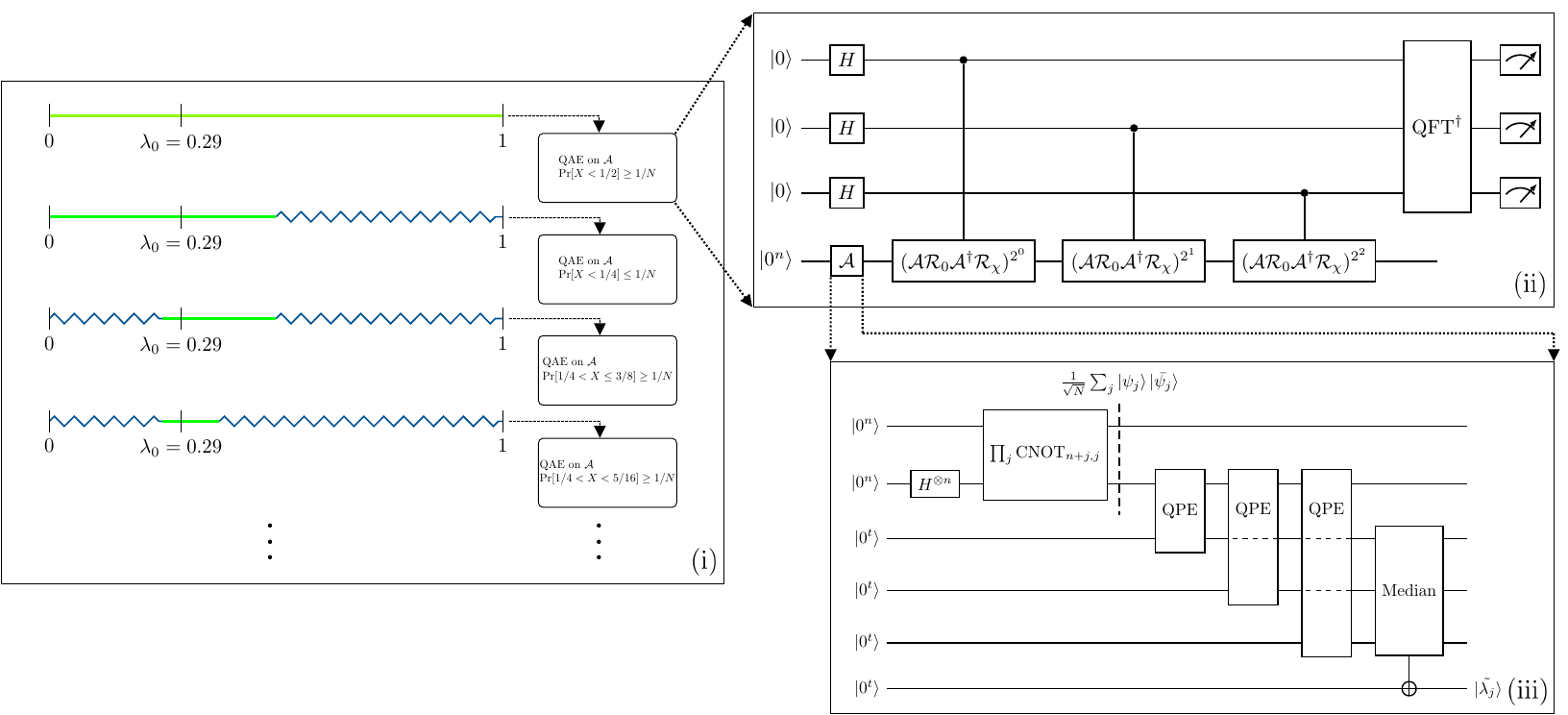}
\caption{A depiction of a simplified version of our algorithm when $\lambda_0 = 0.29$. (i) An example of the first few steps of the binary search portion of our algorithm. At each step, the candidate interval, i.e., the interval in which $\lambda_0$ may lie (green), is cut in half (zigzag blue). Here, $X\in[0,1)$ is the random variable obtained by measuring the first register of $\mathcal{A}\ket{0}$ and considering the outcome $\ket{X_1 X_2\ldots}$ as a number $X \coloneqq 0.X_1 X_2\ldots$. (ii) The Quantum Amplitude Estimation (QAE) portion of algorithm. Here, $\ket{\bar{\psi_j}}$ represents the state $\ket{\psi_j}$ with all amplitudes replaced by their complex conjugates, and the $\mathcal{R}$ operators are the usual Grover reflection operators \cite{nielsen2002quantum}. (iii) The algorithm $\mathcal{A}$, including the median trick.}
\label{fig:binary_search}

\label{fig:main_figure}
\end{figure*}

The notation and terminology employed in this work is the following. An $N\times N$ matrix is \textit{sparse} if it has $O(\log N)$ nonzero entries in each row and column, $||H||_\text{max}$ denotes the largest absolute value of any entry of the matrix $H$, and $||H|| = \sqrt{\lambda_\mathrm{max}(H^\dagger H)}$ is the spectral norm $\max_j|\lambda_j(H)|$ in the case where $H$ is Hermitian. We consider a query complexity model in which we have black-box access to an oracle that, on input $(i,k)$, outputs the value and column number of the $k$-th nonzero entry in row $i$ \cite{berry2014exponential}. Finally, in scenarios where a precision $\epsilon$ has been specified, $\Pi$ denotes an operator that projects onto the subspace spanned by $\{\ket{\psi_j} : \lambda_j - \lambda_0 < \epsilon\}$.

We formally consider the following problem.

\begin{problem}\label{prob:eig}
The input is an error tolerance $\epsilon > 0$, oracle access to a sparse $N\times N$ Hermitian matrix $H$ with eigenvalues in $(\epsilon,1-\epsilon)$ and $||H||_\text{max} = O(1)$, and a constant $\gamma < 1$ independent of $\epsilon$ and $N$. The output is an estimate of $\lambda_0$ up to additive error $\epsilon$, as well as a mixed state $\rho$ such that $\Tr(\Pi \rho) \geq \gamma$.
\end{problem}
Without loss of generality, we set $\gamma=2/3$ in the remainder of this work, and remark that for other values of $\gamma$ the arguments are virtually the same. The algorithm is captured by the following theorem.
\begin{theorem}\label{thm:main}
    There exists a quantum algorithm that solves \Cref{prob:eig} using $\widetilde{O}(\sqrt{N}/\epsilon)$ oracle queries and one- and two-qubit gates.
\end{theorem}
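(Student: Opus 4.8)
The plan is to build the unitary $\mathcal{A}$ of \Cref{fig:main_figure}(iii), run a {\QAE}-driven binary search on top of it, and finish with {\QAA}. The key point for $\mathcal{A}$ is that $\frac{1}{\sqrt N}\sum_j\ket{\psi_j}\ket{\bar{\psi_j}}=\frac{1}{\sqrt N}\sum_k\ket k\ket k$ for every orthonormal eigenbasis $\{\ket{\psi_j}\}$ of $H$, since the right-hand side is fixed by $U\otimes\bar U$ for all unitary $U$; this lets us prepare that register pair with $O(\log N)$ Hadamard and CNOT gates and \emph{no} oracle queries, despite not knowing the eigenvectors. We then apply {\QPE} to the $\ket{\psi_j}$ register with $e^{2\pi i H}$, implemented by sparse Hamiltonian simulation: since $H$ has sparsity $O(\log N)$, $\|H\|_\text{max}=O(1)$, and resolving the eigenphases to precision $\epsilon$ needs total evolution time $O(1/\epsilon)$, this costs $\tilde O(1/\epsilon)$ queries, and no phase wraps around because the eigenvalues lie in $(\epsilon,1-\epsilon)$. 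Running $O(\log(1/\delta))$ independent instances of {\QPE} and coherently writing their median into a fresh clock register (the median trick) gives, by linearity, $\mathcal{A}\ket0=\frac{1}{\sqrt N}\sum_j\ket{\Lambda_j}\ket{\psi_j}\ket{\bar{\psi_j}}\ket{g_j}$, where a Chernoff bound puts at least $1-\delta$ of the weight of $\ket{\Lambda_j}$ within $\epsilon$ of $\lambda_j$ and $\ket{g_j}$ is junk; thus $\mathcal{A}$ is a coherent (measurement-free) unitary circuit using $\tilde O(1/\epsilon)$ queries and gates.

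For the binary search, fix a threshold $m$ and let $\chi$ mark the clock register holding a value $<m$, so that $a_m\coloneqq\|P_\chi\mathcal{A}\ket0\|^2=\frac1N\sum_j\Pr[\Lambda_j<m]$. Since $\lambda_0=\min_j\lambda_j$, if $\lambda_0<m-\epsilon$ then $a_m\ge(1-\delta)/N$, whereas if $\lambda_0>m+\epsilon$ then every $\lambda_j$ exceeds $m+\epsilon$ and $a_m\le\delta$. Taking $\delta=1/\mathrm{poly}(N,1/\epsilon)$ separates these regimes by a factor $2$ at the scale $1/N$, and {\QAE} with $O(\sqrt N)$ applications of the Grover operator---amplified by a median over $O(\log(1/\delta'))$ runs so it fails with probability at most $\delta'$---estimates $a_m$ accurately enough to decide which case holds, since its error is $O(1/N)$ when $a_m=O(1/N)$ and is $o(a_m)$ otherwise. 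As a function of $m$ on an $\epsilon$-spaced grid, the test outcome is then a predicate that reads ``no'' below $\lambda_0-\epsilon$, ``yes'' above $\lambda_0+\epsilon$, and is arbitrary only within the width-$2\epsilon$ window around $\lambda_0$; an ordinary binary search over the $O(1/\epsilon)$ grid points tolerates such a noisy middle and, after $O(\log(1/\epsilon))$ tests, confines $\lambda_0$ to an interval of width $O(\epsilon)$, so running the whole procedure at a suitably small constant fraction of $\epsilon$ delivers the estimate $\hat\lambda_0$ with $|\hat\lambda_0-\lambda_0|\le\epsilon$.

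For the state output, take $m=\hat\lambda_0+\Theta(\epsilon)$, chosen together with the internal precision so that a clock value below $m$ forces $\lambda_j<\lambda_0+\epsilon$ up to weight $\delta$ while $j=0$ still survives the cut, hence $a_m\ge1/(2N)$; amplify the $\chi$-marked part of $\mathcal{A}\ket0$ with {\QAA}, using the {\QAE} value of $a_m$ to set the iteration count, so that $O(\sqrt N)$ Grover steps suffice and the preparation error is $o(1)$. The amplified state then lies, to within $o(1)$, in the span of $\{\ket{\psi_j}:\lambda_j-\lambda_0<\epsilon\}$ on the eigenstate register, and tracing out the clock, conjugate and junk registers gives a mixed state $\rho$ with $\Tr(\Pi\rho)\ge\gamma$ once $\delta$ and the {\QAA} iteration count are chosen so that the total error mass is below $1-\gamma$.

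Multiplying the $O(\log(1/\epsilon))$ binary-search steps, the $O(\log(1/\delta'))$ boosting repetitions, the $O(\sqrt N)$ Grover iterations, and the $\tilde O(1/\epsilon)$ cost of one call to $\mathcal{A}$ gives $\tilde O(\sqrt N/\epsilon)$ queries, with the same count of one- and two-qubit gates, while a union bound over the $\tilde O(1)$ randomized sub-steps---each failing with inverse-polynomial probability---yields any desired constant success probability. I expect the main work to be in the analysis rather than the construction: tracking how the imperfect eigenvalue resolution of {\QPE} (the spread of each $\ket{\Lambda_j}$ across the threshold) together with the {\QAE}/{\QAA} approximation errors propagates through the binary search and the final amplification, and confirming that the resulting noisy-but-monotone predicate still admits a correct binary search, so that $|\hat\lambda_0-\lambda_0|=O(\epsilon)$ and $\Tr(\Pi\rho)\ge\gamma$.
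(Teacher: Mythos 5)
Your proposal is correct and takes essentially the same approach as the paper: the same subroutine $\mathcal{A}$ (maximally entangled input state, QPE on $e^{2\pi i H}$ via sparse Hamiltonian simulation, plus the median trick), the same QAE-driven binary search that separates $p_{\rm g}\geq(1-\delta)/N$ from $p_{\rm g}\leq\delta$ at the $1/N$ scale with $\delta=1/\mathrm{poly}(N)$, the same QAA finish, and the same cost accounting giving $\widetilde{O}(\sqrt{N}/\epsilon)$. The only deviations are bookkeeping choices: the paper fixes $\delta=\frac{1}{2N+2}$ and $k=537$ explicitly, marks a two-sided interval around $\theta_0$ (with QPE precision $\epsilon/4$) rather than your one-sided threshold for the QAA step, and proves your asserted ``noisy binary search still converges'' claim as the explicit induction of \Cref{lem:minfindinginduction}.
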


In addition to these theoretical results, we have implemented both algorithms using \cite{Qiskit}. In \Cref{app:implementation} we discuss some technical details regarding this implementation, however the full code may be found at Ref.~\cite{Kerzner_EigenvalueFinding_2023}.

This work is structured as follows. In \Cref{sec:alg}, we describe the solution of \Cref{prob:eig}, and in \Cref{sec:applications}, we discuss and simulate the application of our algorithms to two concrete problems. \Cref{app:extension_general} discusses an extension to general sparse Hermitian matrices, and \Cref{app:details} provides some technical details omitted in \Cref{sec:alg}.

\section{Algorithm description}\label{sec:alg}
In this section, we outline our proof of \Cref{thm:main}, with some details deferred to \Cref{app:details}. We first describe the subroutine $\mathcal{A}$ in \Cref{sec:subroutineA}, and then demonstrate how it is employed in the {\algoAE} algorithm in \Cref{sec:usingA}, which provides the desired estimate of $\lambda_0$. Finally, we show how this estimate can be used in conjunction with {\QAA} to prepare $\rho$ as in \Cref{prob:eig}. From now on, we set $n = \log(N)$ and $m=\log(1/\epsilon)$, where the logarithms are in base $2$.

\subsection{The subroutine $\mathcal{A}$}\label{sec:subroutineA}

The subroutine $\mathcal{A}$ comprises two steps. The first is state preparation.

For our algorithm to work, it suffices to have an initial state whose overlap with the ground state $\ket{\psi_0}$ is roughly $1/N$. We expect that a Haar-random state would suffice for this purpose \cite{wootters1990random, kuperberg19random}, but for clarity we instead prepare $\frac{1}{\sqrt{N}}\sum_j \ket{j}\ket{j}$, which requires $n$ Hadamard and $\mathrm{CNOT}$ gates (see \Cref{fig:main_figure} (iii)). This state may be rewritten as $\frac{1}{\sqrt{N}}\sum_j\ket{\psi_j}\ket{\bar{\psi_j}}$, with $\bar{\cdot}$ denoting the complex conjugate. We will not use the second register during the remainder of $\mathcal{A}$. This register cannot, however, be traced out, because the {\QAE} algorithm which will be used later requires the application of $\mathcal{A}^\dagger$.

The second step is to apply QPE on the unitary $e^{2\pi iH}$, making use of Ref.~\cite{berry2014exponential} to perform the necessary controlled Hamiltonian evolution. The complexity of approximately applying $e^{2\pi i H}$ according to the routine in Ref.~\cite{berry2014exponential} is sublogarithmic in the inverse error between the approximate and exact unitaries. Therefore, to simplify our algorithm's analysis, we assume that this error is negligible compared to the other ones considered below. We remark that the approximation of $e^{2\pi i H}$ is efficient provided $H$ is sparse, which is why we require sparsity of the matrix $H$ (see \Cref{prob:eig}).

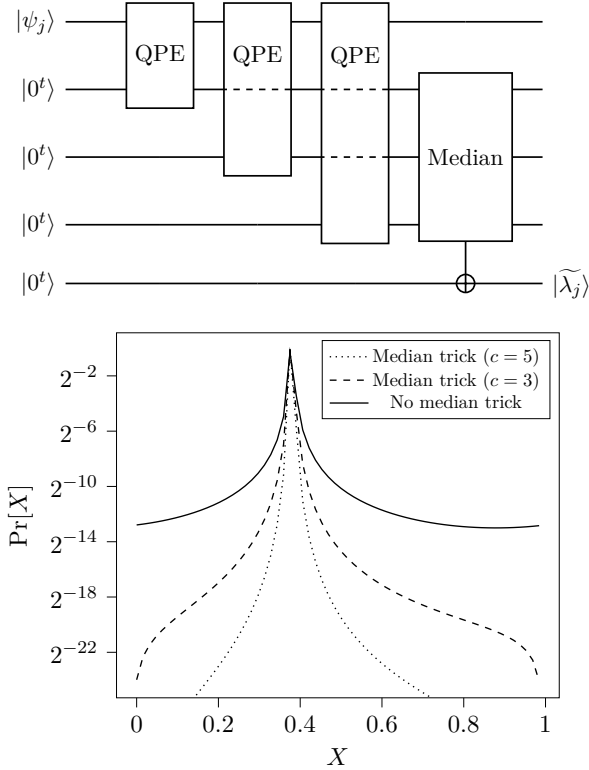
\begin{figure}
\centering
\begin{subfigure}{0.9\textwidth}
\begin{tikzpicture}
\node[scale=0.8]{
\begin{quantikz}[execute at end picture={
            \draw[dashed, thick] (\tikzcdmatrixname-2-4.east) -- 
            (\tikzcdmatrixname-2-4.west);
            \draw[dashed, thick] (\tikzcdmatrixname-2-5.east) -- 
            (\tikzcdmatrixname-2-5.west);
            \draw[dashed, thick] (\tikzcdmatrixname-3-5.east) -- 
            (\tikzcdmatrixname-3-5.west);
}
]
\lstick{$\ket{\psi_j}$}        & \qw & \gate[wires=2]{\QPE} & \gate[wires=3, label style={yshift=0.55cm}]{\QPE} & \gate[wires=4, label style={yshift=1.1cm}]{\QPE} & \qw & \qw \\ 
\lstick{$\ket{0^t}$}        & \qw &  &  &  & \gate[wires=3]{\textrm{Median}}  & \qw\\
\lstick{$\ket{0^t}$}        & \qw & \qw & & & &  \qw\\
\lstick{$\ket{0^t}$}        & \qw & \qw &\qw &  &  & \qw\\
\lstick{$\ket{0^t}$}        & \qw & \qw & \qw & \qw & \targ{}\vqw{-1}& \qw \rstick{$\ket{\Tilde{\lambda_j}}$}
\end{quantikz}
};
\end{tikzpicture}
\end{subfigure}
\par\medskip
\begin{subfigure}{\textwidth}
\begin{tikzpicture}[scale = 0.85]
\begin{axis}[
legend style = {nodes={scale=0.75, transform shape}},
title style = {align=center},
log basis y={2},
tick align=outside,
tick pos=left,
x grid style={darkgray176},
xlabel={\(\displaystyle X\)},
xmin=-0.04921875, xmax=1.03359375,
xtick style={color=black},
y grid style={darkgray176},
ylabel={\(\displaystyle \Pr[X]\)},
ymin=2.43271466162285e-08, ymax=2.16856774707114,
ymode=log,
ytick style={color=black},
ytick={9.31322574615479e-10,1.49011611938477e-08,2.38418579101562e-07,3.814697265625e-06,6.103515625e-05,0.0009765625,0.015625,0.25,4,64},
yticklabels={
  \(\displaystyle {2^{-30}}\),
  \(\displaystyle {2^{-26}}\),
  \(\displaystyle {2^{-22}}\),
  \(\displaystyle {2^{-18}}\),
  \(\displaystyle {2^{-14}}\),
  \(\displaystyle {2^{-10}}\),
  \(\displaystyle {2^{-6}}\),
  \(\displaystyle {2^{-2}}\),
  \(\displaystyle {2^{2}}\),
  \(\displaystyle {2^{6}}\)
}
]
\addplot[semithick, dotted]
table {%
0.0 2.8381112317996056e-11
0.015625 2.135733240510352e-10
0.03125 6.339469317006904e-10
0.046875 1.3667027406118895e-09
0.0625 2.5242677749525187e-09
0.078125 4.271235424279765e-09
0.09375 6.8510165848569004e-09
0.109375 1.0628601978779592e-08
0.125 1.6160744251430603e-08
0.140625 2.4314137566755372e-08
0.15625 3.6470295572560225e-08
0.171875 5.4892673780801574e-08
0.1875 8.340987441285279e-08
0.203125 1.2874374690063971e-07
0.21875 2.0322571363584928e-07
0.234375 3.3069584624015447e-07
0.25 5.602693167297639e-07
0.265625 1.0014156567943387e-06
0.28125 1.923763188415353e-06
0.296875 4.084639556388565e-06
0.3125 1.0033722414764257e-05
0.328125 3.099883287101166e-05
0.34375 0.00014443339724789757
0.359375 0.0017480881862928736
0.375 0.9802691700940998
0.390625 0.017204843941262468
0.40625 0.00046644399871142674
0.421875 7.556256711665989e-05
0.4375 2.22445958782077e-05
0.453125 8.851918083171288e-06
0.46875 4.227272850412352e-06
0.484375 2.2814396039630693e-06
0.5 1.3441206784198887e-06
0.515625 8.457911676471105e-07
0.53125 5.601554136467623e-07
0.546875 3.8643406467800247e-07
0.5625 2.7558964651767036e-07
0.578125 2.020087013865848e-07
0.59375 1.51512656627391e-07
0.609375 1.1586282846856327e-07
0.625 9.007104596100714e-08
0.640625 7.10086114976997e-08
0.65625 5.665201515872137e-08
0.671875 4.565687801632472e-08
0.6875 3.710855383705635e-08
0.703125 3.037147700259255e-08
0.71875 2.4995582948007242e-08
0.734375 2.0656712102925526e-08
0.75 1.711779718601114e-08
0.765625 1.4203071794823876e-08
0.78125 1.1780618780716227e-08
0.796875 9.750369518846321e-09
0.8125 8.03573431277821e-09
0.828125 6.5776964948961396e-09
0.84375 5.330609483904675e-09
0.859375 4.2591950997830186e-09
0.875 3.3364102543174867e-09
0.890625 2.541962265656221e-09
0.90625 1.8613318802221356e-09
0.921875 1.2852214632762735e-09
0.9375 8.093933636032431e-10
0.953125 4.349077478718348e-10
0.96875 1.688174151896278e-10
0.984375 2.5437607629673147e-11
};
\addplot [semithick, dashed]
table {%
0 6.01397144556538e-08
0.015625 1.90847844974445e-07
0.03125 3.40786204674041e-07
0.046875 5.15812760443665e-07
0.0625 7.23561298622581e-07
0.078125 9.74154485851578e-07
0.09375 1.28124931236407e-06
0.109375 1.66360126770301e-06
0.125 2.14745630464719e-06
0.140625 2.77029729245204e-06
0.15625 3.58687014837174e-06
0.171875 4.67917090787091e-06
0.1875 6.17356854733878e-06
0.203125 8.27132759608334e-06
0.21875 1.1305531839753e-05
0.234375 1.5853039711031e-05
0.25 2.29691068787652e-05
0.265625 3.471860258909e-05
0.28125 5.55005317137675e-05
0.296875 9.5787188420576e-05
0.3125 0.000184597771458506
0.328125 0.000422361259304233
0.34375 0.00130903307830468
0.359375 0.0080821314137052
0.375 0.943638472070905
0.390625 0.0415174571571635
0.40625 0.00301374788831031
0.421875 0.000778083968537323
0.4375 0.000311061387179653
0.453125 0.00015559375622209
0.46875 8.92458154230982e-05
0.484375 5.61498630274932e-05
0.5 3.77656728983303e-05
0.515625 2.67157487424451e-05
0.53125 1.96608888011945e-05
0.546875 1.49364432335959e-05
0.5625 1.16476413566705e-05
0.578125 9.28354860471097e-06
0.59375 7.53754608486403e-06
0.609375 6.21781288855019e-06
0.625 5.20002687434059e-06
0.640625 4.40112096224244e-06
0.65625 3.76409978892568e-06
0.671875 3.24893104924793e-06
0.6875 2.82690486045464e-06
0.703125 2.47704132136982e-06
0.71875 2.18374452982082e-06
0.734375 1.93523563723258e-06
0.75 1.722484526097e-06
0.765625 1.53846749057725e-06
0.78125 1.37764213219858e-06
0.796875 1.2355694075024e-06
0.8125 1.10863677252666e-06
0.828125 9.93851539747452e-07
0.84375 8.88683304719242e-07
0.859375 7.90940629402647e-07
0.875 6.98671301887366e-07
0.890625 6.10078167478209e-07
0.90625 5.23444193427161e-07
0.921875 4.37061353748433e-07
0.9375 3.49158240177261e-07
0.953125 2.57821061992516e-07
0.96875 1.60901843399204e-07
0.984375 5.59060138936981e-08
};
\addplot [semithick]
table {%
0 0.000141592598539933
0.015625 0.000147680036023204
0.03125 0.000154930420040735
0.046875 0.000163534677897459
0.0625 0.000173735729633643
0.078125 0.000185844575205762
0.09375 0.000200262797672638
0.109375 0.000217514525115031
0.125 0.00023829263108521
0.140625 0.000263526863726852
0.15625 0.000294486612485067
0.171875 0.00033293995973432
0.1875 0.000381407173721035
0.203125 0.000443578580290885
0.21875 0.000525030914768051
0.234375 0.000634513174739094
0.25 0.000786384891364413
0.265625 0.00100555815764728
0.28125 0.00133837883357198
0.296875 0.00187926233934197
0.3125 0.00284578540371454
0.328125 0.00483716711362132
0.34375 0.0100478200511866
0.359375 0.0324634995273191
0.375 0.810610160468473
0.390625 0.0901039754854258
0.40625 0.0165829842916334
0.421875 0.00673976006088434
0.4375 0.00364349832188734
0.453125 0.00228648117252821
0.46875 0.0015736141497092
0.484375 0.00115349163381608
0.5 0.000885360081094
0.515625 0.000703925918256335
0.53125 0.000575542997437946
0.546875 0.000481443018461146
0.5625 0.000410486859206704
0.578125 0.000355725090487745
0.59375 0.000312640559256415
0.609375 0.000278194615750844
0.625 0.00025028287421139
0.640625 0.000227411217990832
0.65625 0.000208495790805344
0.671875 0.00019273558552191
0.6875 0.000179529021314813
0.703125 0.000168417989019282
0.71875 0.000159049513717131
0.734375 0.000151148990908626
0.75 0.000144501193392782
0.765625 0.000138936601352408
0.78125 0.000134321448569253
0.796875 0.000130550410842253
0.8125 0.00012754120820488
0.828125 0.000125230621147385
0.84375 0.000123571575536661
0.859375 0.000122531057790347
0.875 0.000122088697916608
0.890625 0.000122235914252541
0.90625 0.000122975557567966
0.921875 0.000124322028999323
0.9375 0.000126301880248686
0.953125 0.000128954939383479
0.96875 0.000132336045309348
0.984375 0.00013651752324297
};
\legend{Median trick ($c=5$), Median trick ($c=3$), No median trick};
\end{axis}

\end{tikzpicture}
\end{subfigure}
\caption{\textit{Top:} Circuit for the median trick with $c=3$,
performing the unitary transformation $\ket{w,x,y,z}\mapsto\ket{w,x,y,z+\mathrm{Median}(w,x,y)}$. 
\textit{Bottom:} QPE on the unitary matrix $\mathrm{diag}(1, e^{2\pi i \lambda})$ with input state $\ket{1}$, $t=6$ bits of precision, and $\lambda = 0.3789$.
We plot the discrete probability distributions obtained without and with the median trick ($c$ in the legend). The curves cluster more closely around $\lambda$ as $c$ increases, in agreement with the theoretical result.
}
\label{fig:median_trick}
\end{figure}

For each eigenvalue $\lambda_j$, QPE does not always yield $\ket{\lambda_j}$, as the binary expansion of $\lambda_j$ may be too long for an exact representation \cite{nielsen2002quantum}. Instead, QPE prepares a state with the following property. Let $X = 0.X_1 X_2\ldots \in [0,1]$ be the random variable obtained when the state is measured in the computational basis and $\ket{X_1 X_2\ldots}$ is observed. Then $\Pr[X\approx \lambda_j]$ is close to $1$.\footnote{
This statement is a slight simplification, as QPE only considers the eigenvalues modulo $1$. So if $\lambda_j\approx 0$ (resp. $\lambda_j\approx 1$), then the probability of observing $X\approx 1$ (resp. $X\approx 0$) may be high. The possibility of this type of wraparound error is precluded by the requirement $\lambda_j\in(\epsilon,1-\epsilon)$ in the statement of \Cref{prob:eig}.
} Formally, we will require below that $\Pr[|X - \lambda_j|\leq \epsilon/2] \geq 1- \delta$, where $\delta = \frac{1}{2N+2}$. (For more information on this choice of $\delta$, see \Cref{app:details}.)

One way to enforce this requirement is to use $m + O(\log(1/\delta))$ bits of precision in QPE \cite[\S 5.2.1]{nielsen2002quantum}. However, because the runtime of QPE scales exponentially in the number of bits of precision, this approach cannot be taken to have quantum advantage. Instead, we employ the so-called \textit{median trick} \cite{jerrum1986random, nagaj2009fast}, which we depict in \Cref{fig:median_trick}. To do so, we first select $t = m + O(1)$, such that when $t$-bit QPE is executed, the probability of observing such an $X$ is less than $1/4$.\footnote{
Recall that when using standard QPE, in order to obtain $m$ bits of precision with probability $1 - \zeta$, it suffices to use $t = m + \lceil\log(2 + \frac{1}{2\zeta})\rceil$ ancilla qubits \cite{nielsen2002quantum}.
} We run $t$-bit QPE some number $c$ times in serial, where the $i$-th execution of QPE uses the $\ket{\psi_j}$ register as the eigenvector register and uses the $i$-th clock register as the eigenvalue register (see \Cref{fig:median_trick}). We then reversibly compute the median by XOR-ing it into a ``median register'' of size $t$. To simplify the analysis of the algorithm, we uncompute the $c$ clock registers (however in practice this is unnecessary). After the median trick, the state of the system is
\begin{equation}\label{eq:Psi}
    \ket{\Psi} = \frac{1}{\sqrt{N}} \sum_{j=0}^{N-1} \ket{0^{ct}}\ket{\tilde{\lambda_j}}\ket{\psi_j}\ket{\bar{\psi_j}}.
\end{equation}
We want $\ket{\tilde{\lambda_j}}$ such that, if measured, the observed string $X$ satisfies $|0.X_1X_2\ldots - \lambda_0| > \epsilon/2$ with probability at most $\delta$. If we view the $c$ clock registers as independent Bernoulli trials with success probability $3/4$, then we can see that the probability that such an $X$ is observed in the median register is at most the probability that $c/2$ or more of those trials fail. That probability is given by $\sum_{i=c/2}^c \binom{c}{i}\left(\frac{1}{4}\right)^i \left(\frac{3}{4}\right)^{c-i}$, and choosing $c = O(\log(1/\delta))$ forces this sum to be less than $\delta$. \Cref{fig:median_trick} illustrates how the median trick can be used to amplify the probability that $X$ is close to the true eigenvalue.

In conclusion, we define $\mathcal{A}$ to be the algorithm comprising the two steps above. It first prepares $\frac{1}{\sqrt{N}}\sum_j \ket{\psi_j}\ket{\bar{\psi_j}}$, and then it runs QPE, using the median trick to prepare the state $\ket{\Psi}$ in \Cref{eq:Psi}. 
The runtime of $\mathcal{A}$ is dominated by applying QPE with $t= O(\log(1/\epsilon))$ bits of precision. Using Ref.~\cite{berry2014exponential} we execute $t$-bit QPE using $\widetilde{O}(\log(N)/\epsilon)$ oracle queries and one- and two-qubit gates. We remark that we may also use $H$ with nonconstant entries, and our algorithm's runtime will scale linearly in $||H||_\mathrm{max}$.

\subsection{Determining $\lambda_0$ and preparing $\rho$}\label{sec:usingA}

Below, we describe how to employ $\mathcal{A}$ in our main algorithms. Since both {\QAE} and {\QAA} are required as subroutines, we first recall them. A schematic representation of {\QAE} is given in \Cref{fig:binary_search} (ii).
\begin{theorem}[Quantum Amplitude Estimation {\QAE} \cite{Brassard_2002}]\label{thm:AE}
    Let $\mathcal{B}$ be a unitary matrix and $\ket{\Phi} = \mathcal{B}\ket{0^k}$. For a function $\chi: \{0,1\}^k \rightarrow \{\mathrm{bad}, \mathrm{good}\}$, let us write $\ket{\Phi} = \sqrt{p_{\rm b}}\ket{\Phi_{\rm b}} + \sqrt{p_{\rm g}}\ket{\Phi_{\rm g}}$, where $p_{\rm b}, p_{\rm g}$ are probabilities and $\ket{\Phi_{\rm b}}$ (resp. $\ket{\Phi_{\rm g}}$) lies in the subspace spanned by $\{\ket{x} : \chi(x) = \mathrm{bad}\}$ (resp. $\{\ket{x} : \chi(x) = \mathrm{good}\}$). Then for each even positive integer $M$, there exists a quantum algorithm that outputs $\widetilde{p_{\rm g}}$ such that 
    \begin{equation}\label{eq:AEerror}
        |\widetilde{p_{\rm g}} - p_{\rm g}| < 2\pi \frac{\sqrt{p_{\rm g}(1-p_{\rm g})}}{M} + \frac{\pi^2}{M^2}
    \end{equation}
    with probability at least $8/\pi^2$. This algorithm requires $O(M)$ applications of $\mathcal{B}$ and $\mathcal{B}^\dagger$.
\end{theorem}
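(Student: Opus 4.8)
The plan is to reduce the estimation of $p_{\rm g}$ to quantum phase estimation applied to the Grover iterate, following the construction of \cite{Brassard_2002}. First I would form the amplitude-amplification operator $Q = -\mathcal{B}\,(I - 2\ket{0^k}\!\bra{0^k})\,\mathcal{B}^{\dagger}\,S_\chi$, where $S_\chi$ attaches the phase $-1$ to the good subspace; each application of $Q$ thus invokes $\mathcal{B}$ once and $\mathcal{B}^{\dagger}$ once, together with a fixed number of further one- and two-qubit gates. The geometric input is the standard fact that $Q$ preserves the two-dimensional subspace $\mathrm{span}\{\ket{\Phi_{\rm b}},\ket{\Phi_{\rm g}}\}$, which contains $\ket{\Phi}$, and acts on it as a rotation by angle $2\theta$, where I parametrize $\sqrt{p_{\rm g}} = \sin\theta$ and $\sqrt{p_{\rm b}} = \cos\theta$ with $\theta\in[0,\pi/2]$. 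Consequently $Q$ restricted to this plane has eigenvalues $e^{\pm 2i\theta}$ with orthonormal eigenvectors $\ket{\Psi_\pm}$, and $\ket{\Phi}$ equals, up to a global phase, the balanced combination $\tfrac{1}{\sqrt2}\big(e^{i\theta}\ket{\Psi_+} + e^{-i\theta}\ket{\Psi_-}\big)$.

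Next I would run textbook phase estimation on $Q$ with $M$ grid points: prepare $\ket{\Phi} = \mathcal{B}\ket{0^k}$ on the eigenvector register, attach a $\lceil\log_2 M\rceil$-qubit clock, apply the controlled powers $Q^{2^0},\dots,Q^{2^{\lceil\log_2 M\rceil-1}}$, perform the inverse quantum Fourier transform, and measure the clock to obtain $y\in\{0,\dots,M-1\}$. Since $\sum_\ell 2^\ell = O(M)$ and each $Q$ uses one $\mathcal{B}$ and one $\mathcal{B}^{\dagger}$, the whole procedure uses $O(M)$ applications of $\mathcal{B}$ and $\mathcal{B}^{\dagger}$, while the Fourier transform adds only $O(\log^2 M)$ gates. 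Because $\ket{\Phi}$ is an equal superposition of the two eigenvectors, whose eigenvalues are $e^{2\pi i\varphi_+}$ and $e^{2\pi i\varphi_-}$ with $\varphi_+ = \theta/\pi$ and $\varphi_- = 1 - \theta/\pi$, the measurement estimates one of these, but the reported value $\widetilde{p_{\rm g}} := \sin^2(\pi y/M)$ is insensitive to which branch occurs, since $\sin^2\!\big(\pi(M-y)/M\big) = \sin^2(\pi y/M)$. Invoking the standard phase-estimation accuracy bound, with probability at least $8/\pi^2$ the estimate $\tilde\theta := \pi y/M$ satisfies $|\tilde\theta - \theta| \le \pi/M$.

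Finally I would propagate this into the error on $\widetilde{p_{\rm g}}$. Writing $\tilde\theta = \theta + \delta$ with $|\delta| \le \pi/M$ and using the identity $\sin^2 x - \sin^2 y = \sin(x+y)\sin(x-y)$,
\begin{equation*}
    |\widetilde{p_{\rm g}} - p_{\rm g}| = |\sin(2\theta+\delta)|\,|\sin\delta| \le \big(|\sin 2\theta| + |\delta|\big)\,|\delta| \le \frac{2\pi\sqrt{p_{\rm g}(1-p_{\rm g})}}{M} + \frac{\pi^2}{M^2},
\end{equation*}
where the last step uses $|\sin 2\theta| = 2\sqrt{\sin^2\theta\cos^2\theta} = 2\sqrt{p_{\rm g}(1-p_{\rm g})}$ and $|\delta| \le \pi/M$; this is exactly \Cref{eq:AEerror}. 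The degenerate cases $p_{\rm g}\in\{0,1\}$, where the invariant subspace collapses to a line and $Q$ acts as $\pm I$, are handled separately and trivially. I expect the only genuinely delicate point to be the phase-estimation guarantee itself: one must verify that for an eigenphase $\varphi$ that is generally not a multiple of $1/M$, the clock measurement still lands within one grid cell, $|y/M - \varphi| \le 1/M$, with probability at least $8/\pi^2$ --- the sharper ``adjacent-cell'' statement rather than the weaker $4/\pi^2$ for the nearest cell --- and that the two-eigenvalue superposition present in $\ket{\Phi}$ does not degrade this bound; both facts follow from the standard analysis of the QFT-based estimator but should be stated with care.
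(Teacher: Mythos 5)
Your proposal is correct: the paper does not prove \Cref{thm:AE} itself but imports it verbatim from \cite{Brassard_2002}, and your argument (Grover iterate as a rotation by $2\theta$ in the invariant plane, phase estimation on its eigenphases $\theta/\pi$ and $1-\theta/\pi$ with the $8/\pi^2$ adjacent-grid-point guarantee, then the $\sin^2 x - \sin^2 y = \sin(x+y)\sin(x-y)$ error propagation) is exactly the original proof of that cited result. The only cosmetic deviation is using a power-of-two clock of $\lceil\log_2 M\rceil$ qubits instead of an order-$M$ Fourier transform, which only improves the error bound and keeps the query count at $O(M)$.
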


\begin{theorem}[Quantum Amplitude Amplification {\QAA} \cite{Brassard_2002}]\label{thm:AA}
    Using the same notation as \Cref{thm:AE}, there exists a quantum algorithm that prepares $\ket{\Phi_{\rm g}}$ with probability $\geq 1/2$, together with a flag indicating success or failure, using $O(1/\sqrt{p_{\rm g}})$ applications of $\mathcal{B}$ and $\mathcal{B}^\dagger$.
\end{theorem}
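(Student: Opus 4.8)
The plan is to feed the subroutine $\mathcal{A}$ of \Cref{sec:subroutineA} into \Cref{thm:AE,thm:AA} as the unitary $\mathcal{B}$, with a good/bad predicate that reads the median register $\ket{\tilde{\lambda_j}}$ of $\ket{\Psi}$ as a number and compares it to a threshold. Fix $x\in(0,1)$ and call the measurement outcome $X=0.X_1X_2\ldots$ of that register \emph{good} iff $X<x$. Because $\mathcal{A}\ket{0}=\ket{\Psi}$ as in \Cref{eq:Psi}, the good-subspace weight is $p_{\rm g}(x)=\frac1N\sum_{j=0}^{N-1}\Pr[X<x\mid j]$, where each conditional probability comes from the $t$-bit median-trick QPE and satisfies $\Pr[\,|X-\lambda_j|\le\eta\mid j\,]\ge 1-\delta$ for the precision $\eta=\Theta(\epsilon)$ and the parameter $\delta$ of \Cref{sec:subroutineA}. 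This yields the dichotomy to be exploited: if $\lambda_0<x-\eta$ then the $j=0$ term alone gives $p_{\rm g}(x)\ge(1-\delta)/N$, while if $\lambda_0>x+\eta$ then every $\lambda_j\ge\lambda_0>x+\eta$ forces $p_{\rm g}(x)\le\delta$. Choosing $\delta=\Theta(1/N)$ with a small enough constant (depending only on $\gamma$; $\delta=1/(2N+2)$ is admissible, cf.\ \Cref{app:details}) separates these regimes by a fixed multiplicative factor, $p_{\rm g}\ge c_1/N$ versus $p_{\rm g}\le c_2/N$ with $c_1>c_2$, and this is the structural fact that makes a quadratically smaller query count possible.

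To estimate $\lambda_0$, I would have {\algoAE} run a binary search in the spirit of D\"{u}rr--H{\o}yer minimum finding \cite{durr1996quantum}. It keeps an interval $[a,b)\subseteq(0,1)$, initially $(\epsilon,1-\epsilon)$, under the relaxed invariant $a-\eta\le\lambda_0<b+\eta$; while $b-a>2\eta$ it sets $x$ to a $t$-bit rounding of the midpoint and runs {\QAE} of \Cref{thm:AE} with $\mathcal{B}=\mathcal{A}$, the predicate above, and $M=\Theta(\sqrt N)$ chosen via \Cref{eq:AEerror} so that the additive error $2\pi\sqrt{p_{\rm g}(1-p_{\rm g})}/M+\pi^2/M^2$ (which is $O(1/(M\sqrt N))$ when $p_{\rm g}=\Theta(1/N)$) falls below $(c_1-c_2)/(2N)$; the $8/\pi^2$ success probability is boosted to $1-\Theta(1/R)$ by taking the median of $O(\log R)$ runs, where $R=O(\log(1/\epsilon))$ is the number of rounds. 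If the estimate of $p_{\rm g}$ exceeds $(c_1+c_2)/(2N)$ the search recurses on $[a,x)$, else on $[x,b)$. Conditioned on all {\QAE} calls being correct, the invariant is preserved — the only subtle case is the ambiguous zone $|\lambda_0-x|\le\eta$, where either branch keeps $\lambda_0$ within $\eta$ of an endpoint, and here one uses that any subdivided interval has width $>2\eta$. When the search halts the interval has width $\le 2\eta$, so its midpoint is within $2\eta<\epsilon$ of $\lambda_0$; this is the output estimate, correct with probability $\ge 2/3$ by a union bound over the $R$ rounds.

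To prepare $\rho$, I would carry the search a few more rounds (or re-run it with a smaller $\eta=\Theta(\epsilon)$) to get $\tilde{\lambda_0}$ with $|\tilde{\lambda_0}-\lambda_0|=\Theta(\epsilon)$ small enough, and fix a threshold $x=\tilde{\lambda_0}+\Theta(\epsilon)$ inside the window $[\lambda_0+\eta,\ \lambda_0+\epsilon-\eta]$, which is nonempty and hittable once $\eta+|\tilde{\lambda_0}-\lambda_0|<\epsilon/2$. Then $\Pr[X<x\mid 0]\ge1-\delta$, so $p_{\rm g}(x)\ge(1-\delta)/N$, whereas every eigenstate with $\lambda_j-\lambda_0\ge\epsilon$ satisfies $\lambda_j\ge x+\eta$ and is flagged good with probability at most $\delta$. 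Apply {\QAA} of \Cref{thm:AA} with $\mathcal{B}=\mathcal{A}$ and this predicate: it uses $O(1/\sqrt{p_{\rm g}})=O(\sqrt N)$ applications of $\mathcal{A},\mathcal{A}^\dagger$ and reports a success flag, so we repeat it $O(1)$ times and keep the first success, leaving failure probability $\le 1/12$. Tracing out all registers except the eigenvector register of the prepared state — the orthonormal conjugate registers $\ket{\bar{\psi_j}}$ decohere the mixture in $j$ — gives $\rho=(1-2^{-O(1)})\,\big(\textstyle\sum_k\Pr[X<x\mid k]\big)^{-1}\sum_j\Pr[X<x\mid j]\,\ket{\psi_j}\bra{\psi_j}+2^{-O(1)}\sigma$ for some state $\sigma$. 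Since the $j=0$ mass is $\ge1-\delta$ and the total mass on eigenstates with $\lambda_j-\lambda_0\ge\epsilon$ is $\le\delta N$, this gives $\Tr(\Pi\rho)\ge(1-2^{-O(1)})\,\frac{1-\delta}{1-\delta+\delta N}$, which exceeds $\gamma=2/3$ once the constant in $\delta=\Theta(1/N)$ and the number of {\QAA} repetitions are fixed.

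It then remains to count. Each application of $\mathcal{A}$ or $\mathcal{A}^\dagger$ costs $\widetilde{O}(1/\epsilon)$ oracle queries and one- and two-qubit gates (\Cref{sec:subroutineA}); the search does $R=O(\log(1/\epsilon))$ rounds, each a median of $O(\log R)$ {\QAE} instances of $O(\sqrt N)$ applications of $\mathcal{A}/\mathcal{A}^\dagger$, and {\QAA} adds $O(1)\cdot O(\sqrt N)$ more, with Grover reflections and $t$-bit comparisons contributing only $\mathrm{poly}(\log N,\log(1/\epsilon))$ extra gates per use; multiplying, the total is $\widetilde{O}(\sqrt N/\epsilon)$, which proves \Cref{thm:main}. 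The step I expect to be the main obstacle is the quantitative bookkeeping of the first two paragraphs: everything rests on the $1/N$ ``signal'' from $\ket{\psi_0}$ dominating the aggregate QPE ``misplacement'' leakage, which pins $\delta$ to $\Theta(1/N)$ (hence $c=O(\log N)$ inside the median trick) and demands that $M=\Theta(\sqrt N)$ be at once large enough for \Cref{eq:AEerror} to resolve the $\Theta(1/N)$ gap and small enough to keep the budget at $\widetilde{O}(\sqrt N/\epsilon)$; the binary-search invariant near the ambiguous zone and the mixedness estimate for $\rho$ are comparatively routine.
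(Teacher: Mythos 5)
Your proposal does not address the statement it was supposed to prove. \Cref{thm:AA} is the amplitude-amplification theorem of Brassard, H{\o}yer, Mosca and Tapp itself: the claim that, given \emph{any} unitary $\mathcal{B}$ with $\mathcal{B}\ket{0^k}=\sqrt{p_{\rm b}}\ket{\Phi_{\rm b}}+\sqrt{p_{\rm g}}\ket{\Phi_{\rm g}}$, one can prepare $\ket{\Phi_{\rm g}}$ with probability at least $1/2$, with a success flag, using $O(1/\sqrt{p_{\rm g}})$ applications of $\mathcal{B}$ and $\mathcal{B}^\dagger$. The paper does not prove this; it imports it from \cite{Brassard_2002} as a black box. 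What you wrote is instead an end-to-end blueprint of \Cref{thm:main}: how the subroutine $\mathcal{A}$, the threshold predicate $\chi_y$, the $\delta=\Theta(1/N)$ bookkeeping, the binary search, {\QAE} and {\QAA} are combined to estimate $\lambda_0$ and prepare $\rho$. Indeed, at the key step you write ``Apply {\QAA} of \Cref{thm:AA} \dots it uses $O(1/\sqrt{p_{\rm g}})=O(\sqrt{N})$ applications,'' i.e.\ you \emph{invoke} the very statement under consideration rather than establish it — which is circular with respect to the assigned task, however reasonable it is as a sketch of the paper's main algorithm.

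A proof of \Cref{thm:AA} would have nothing to do with $\mathcal{A}$, QPE, or the median trick. One writes $\mathcal{B}\ket{0^k}=\sin\theta\,\ket{\Phi_{\rm g}}+\cos\theta\,\ket{\Phi_{\rm b}}$ with $\sin^2\theta=p_{\rm g}$, forms the Grover iterate $Q=-\mathcal{B}(I-2\ket{0^k}\bra{0^k})\mathcal{B}^\dagger(I-2P_{\rm g})$ where $P_{\rm g}$ projects onto the span of $\{\ket{x}:\chi(x)=\mathrm{good}\}$, and shows that $Q$ acts as a rotation by $2\theta$ in the two-dimensional space spanned by $\ket{\Phi_{\rm g}},\ket{\Phi_{\rm b}}$, so that $Q^j\mathcal{B}\ket{0^k}=\sin((2j+1)\theta)\ket{\Phi_{\rm g}}+\cos((2j+1)\theta)\ket{\Phi_{\rm b}}$. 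If $p_{\rm g}$ is known, taking $j\approx\pi/(4\theta)=O(1/\sqrt{p_{\rm g}})$ and measuring gives a good outcome with probability at least $1/2$ (in fact close to $1$); if $p_{\rm g}$ is unknown, the exponentially increasing randomized iteration schedule of \cite{Brassard_2002} (\textbf{QSearch}) still uses an expected $O(1/\sqrt{p_{\rm g}})$ applications of $\mathcal{B},\mathcal{B}^\dagger$. The ``flag'' is simply obtained by measuring the output and evaluating $\chi$ on the result, which certifies whether the post-measurement state lies in the good subspace. None of these ingredients appears in your write-up, so as a proof of \Cref{thm:AA} it has a complete gap, even though much of its content would be relevant to \Cref{thm:main}.
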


The {\QAE} algorithm is based on QPE, and therefore one can again use the median trick to boost the success probability from $8/\pi^2$ to a user-defined threshold.\footnote{
Since {\algoAE} uses $m$ applications of {\QAE}, in order for it to have constant success probability $\nu$, we require each application of {\QAE} to have a success probability $1-\Delta$ that satisfies $(1-\Delta)^m > \nu$, or equivalently, $\Delta < 1-\nu^{1/m}$. The number of clock registers needed for the median trick is therefore $O(\log(1/\Delta)) = O(\log(1/(1-\nu^{1/m}))) = O(\log(m))$, which is negligible.
} Similarly, because {\QAA} outputs a flag indicating whether the algorithm was successful, we may boost the overall success probability by simply repeating the algorithm some small number of times.

\begin{algorithm}[H]
\caption{{\algoAE}}\label{alg:minfinding}
\begin{algorithmic}[1]
\State $k \gets 537$
\State $M\gets \sqrt{\frac{kN}{1-\delta}}$
\State $q \gets \frac{1-\delta}{N}\left(\frac{1}{2} + \frac{\sqrt{2}\pi}{\sqrt{k}} + \frac{\pi^2}{k}\right)$
\State $y_0 \gets \frac{1}{2}$
\For{$i=1,\ldots m$}
\State $\widetilde{p}\gets \QAE(\mathcal{A}, \chi_{y_i}, M)$ \label{line:6}
\If{$\widetilde{p} > q$}
    \State $y_i \gets y_{i-1} - 1/2^{i+1}$
\Else
    \State $y_i \gets y_{i-1} + 1/2^{i+1}$
\EndIf
\EndFor
\State \Return $y_m$
\end{algorithmic}
\end{algorithm}

As represented in \Cref{fig:binary_search} and \Cref{alg:minfinding}, our main algorithm, {\algoAE}, is an iterative protocol that uses $\mathcal{A}$ in conjunction with {\QAE} to build an estimate $y_m \approx \lambda_0$ via a binary search \cite{cormen2022introduction}. At each step $i \in \{ 1,\ldots, m \}$ of the binary search, we determine whether the median register in \Cref{eq:Psi} encodes any state $\ket{\tilde{\lambda_j}}$ such that $\lambda_j$ is less than the current estimate $y_i$, and update the estimate accordingly. In \Cref{app:details}, we show that our estimates $y_i$ satisfy $|y_i - \lambda_0| \leq 1/2^{i+1} + \epsilon/2$. In particular, this tells us that the final estimate $y_m$ satisfies $|y_m - \lambda_0| \leq 1/2^m = \epsilon$, as desired. A numerical demonstration of convergence for a set of random Hamiltonians is given in \Cref{fig:convergence}, where it is possible to see that for each $i$ the estimation error $|y_i - \lambda_0|$ is below $1/2^{i+1} + \epsilon/2$, and is often several orders of magnitude smaller. In the next paragraph, we describe how each step $i$ of the binary search is practically carried out. For clarity we drop the subscript $i$.

\begin{figure}[htp]
\centering
\begin{tikzpicture}[scale=0.875,
]
\definecolor{darkgray176}{RGB}{176,176,176}
\definecolor{green}{RGB}{0,128,0}

\begin{axis}[
legend style = {nodes={scale=0.8, transform shape}},
log basis y={2},
tick align=outside,
tick pos=left,
x grid style={darkgray176},
xlabel={\(\displaystyle i\)},
xmin=-0.3, xmax=6.3,
xtick style={color=black},
y grid style={darkgray176},
ylabel={\(\displaystyle |y_i - \lambda_0|\)},
ymin=0.00033323157305711, ymax=0.719901004508144,
ymode=log,
ytick style={color=black},
ytick={6.103515625e-05,0.000244140625,0.0009765625,0.00390625,0.015625,0.0625,0.25,1,4},
yticklabels={
  \(\displaystyle {2^{-14}}\),
  \(\displaystyle {2^{-12}}\),
  \(\displaystyle {2^{-10}}\),
  \(\displaystyle {2^{-8}}\),
  \(\displaystyle {2^{-6}}\),
  \(\displaystyle {2^{-4}}\),
  \(\displaystyle {2^{-2}}\),
}
]
\addplot [thick, black, dotted]
table {%
0 0.5078125
1 0.2578125
2 0.1328125
3 0.0703125
4 0.0390625
5 0.0234375
6 0.015625
};
\addplot [line width=0.1pt, blue!30]
table {%
0 0.436978537223037
1 0.186978537223037
2 0.0619785372230369
3 0.000521462776963105
4 0.0307285372230369
5 0.0151035372230369
6 0.00729103722303689
};
\addplot [line width=0.1pt, blue!30]
table {%
0 0.463971195874056
1 0.213971195874056
2 0.0889711958740564
3 0.0264711958740564
4 0.00477880412594362
5 0.0108461958740564
6 0.00303369587405638
};
\addplot [line width=0.1pt, blue!30]
table {%
0 0.411573051620037
1 0.161573051620037
2 0.0365730516200365
3 0.0259269483799635
4 0.00532305162003655
5 0.0103019483799635
6 0.00248944837996345
};
\addplot [line width=0.1pt, blue!30]
table {%
0 0.464764236107915
1 0.214764236107915
2 0.0897642361079154
3 0.0272642361079154
4 0.00398576389208456
5 0.0116392361079154
6 0.00382673610791544
};
\addplot [line width=0.1pt, blue!30]
table {%
0 0.41613959578879
1 0.16613959578879
2 0.0411395957887898
3 0.0213604042112102
4 0.00988959578878977
5 0.00573540421121023
6 0.00207709578878977
};
\addplot [line width=0.1pt, blue!30]
table {%
0 0.400183571159913
1 0.150183571159913
2 0.025183571159913
3 0.037316428840087
4 0.00606642884008698
5 0.00955857115991302
6 0.00174607115991302
};
\addplot [line width=0.1pt, blue!30]
table {%
0 0.44437808345438
1 0.19437808345438
2 0.0693780834543795
3 0.00687808345437954
4 0.0243719165456205
5 0.00874691654562046
6 0.000934416545620459
};
\addplot [line width=0.1pt, blue!30]
table {%
0 0.402481373721519
1 0.152481373721519
2 0.0274813737215188
3 0.0350186262784812
4 0.00376862627848121
5 0.0118563737215188
6 0.00404387372151879
};
\addplot [line width=0.1pt, blue!30]
table {%
0 0.320784906142381
1 0.0707849061423806
2 0.0542150938576194
3 0.00828490614238059
4 0.0229650938576194
5 0.00734009385761941
6 0.000472406142380588
};
\addplot [line width=0.1pt, blue!30]
table {%
0 0.247784433656915
1 0.00221556634308473
2 0.122784433656915
3 0.0602844336569153
4 0.0290344336569153
5 0.0134094336569153
6 0.00559693365691527
};
\addplot [line width=0.1pt, blue!30]
table {%
0 0.294293795531301
1 0.0442937955313012
2 0.0807062044686988
3 0.0182062044686988
4 0.0130437955313012
5 0.00258120446869878
6 0.00523129553130122
};
\addplot [line width=0.1pt, blue!30]
table {%
0 0.471139667101889
1 0.221139667101889
2 0.0961396671018891
3 0.0336396671018891
4 0.00238966710188914
5 0.0132353328981109
6 0.00542283289811086
};
\addplot [line width=0.1pt, blue!30]
table {%
0 0.410405639996475
1 0.160405639996475
2 0.0354056399964746
3 0.0270943600035254
4 0.00415563999647457
5 0.0114693600035254
6 0.00365686000352543
};
\addplot [line width=0.1pt, blue!30]
table {%
0 0.431345119392355
1 0.181345119392355
2 0.056345119392355
3 0.00615488060764503
4 0.025095119392355
5 0.00947011939235497
6 0.00165761939235497
};
\addplot [line width=0.1pt, blue!30]
table {%
0 0.436290093599886
1 0.186290093599886
2 0.0612900935998861
3 0.00120990640011394
4 0.0300400935998861
5 0.0144150935998861
6 0.00660259359988606
};
\addplot [line width=0.1pt, blue!30]
table {%
0 0.361882098120527
1 0.111882098120527
2 0.0131179018794727
3 0.0493820981205273
4 0.0181320981205273
5 0.0025070981205273
6 0.0053054018794727
};
\addplot [line width=0.1pt, blue!30]
table {%
0 0.32670605908428
1 0.0767060590842804
2 0.0482939409157196
3 0.0142060590842804
4 0.0170439409157196
5 0.00141894091571965
6 0.00639355908428035
};
\addplot [line width=0.1pt, blue!30]
table {%
0 0.280203999774704
1 0.0302039997747036
2 0.0947960002252964
3 0.0322960002252964
4 0.00104600022529638
5 0.0145789997747036
6 0.00676649977470362
};
\addplot [line width=0.1pt, blue!30]
table {%
0 0.474876990570719
1 0.224876990570719
2 0.0998769905707186
3 0.0373769905707186
4 0.00612699057071862
5 0.00949800942928138
6 0.00168550942928138
};
\addplot [line width=0.1pt, blue!30]
table {%
0 0.418546295341497
1 0.168546295341497
2 0.0435462953414971
3 0.0189537046585029
4 0.0122962953414971
5 0.00332870465850289
6 0.00448379534149711
};
\addplot [thick, blue, dashed]
table {%
0 0.382771070722315
1 0.147066258496602
2 0.0603129472867657
3 0.0278162218484409
4 0.0142548645523323
5 0.00780387518356952
6 0.00391986296716631
};
\addplot [thick, green, dashed]
table {%
0 0.484343716250865
1 0.24560889840206
2 0.124566718138627
3 0.0620667181386275
4 0.0315234487373657
5 0.0161520099661799
6 0.00843079289685955
};
\legend{$1/2^{i+1} + \epsilon/2$, Single run,,,,,,,,,,,,,,,,,,,, Average error, Maximum error};
\end{axis}

\end{tikzpicture}
\caption{Convergence of $y_i$ to $\lambda_0$. Taking $N = 8$ and $\epsilon = 1/2^6$, we executed {\algoAE} on $1000$ diagonal matrices with eigenvalues sampled uniformly from $(\epsilon, 1-\epsilon)$. The light blue curves show the errors at step $i$ for twenty of those matrices, while the dashed curves indicate the average and maximum errors at step $i$, as indicated in the legend. The value of $1/2^{i+1}+\epsilon/2$ is given by the black dotted line. In accordance with \Cref{lem:minfindinginduction}, all errors lie below the black curve.}
\label{fig:convergence}
\end{figure}
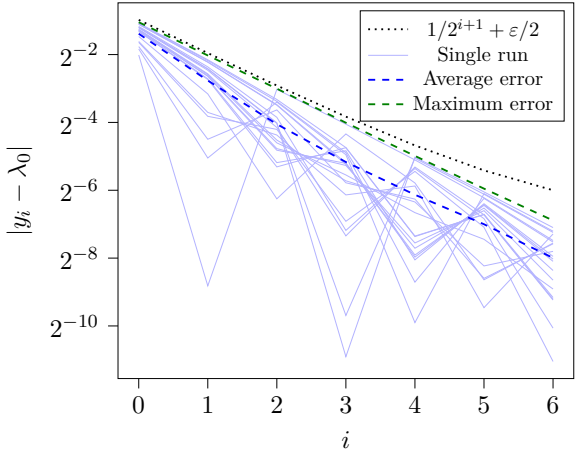

For $y \in (0,1)$, we define the function $\chi_{y}$ employed for {\QAE} (see \Cref{thm:AE}), as the step function mapping $x\in\{0,1\}^t$ to $1$ if and only if $0.x_1 x_2\ldots x_t < y$. We write $\widetilde{p} \gets \QAE(\mathcal{A}, \chi_y, M)$ to denote the output of {\QAE} with $\mathcal{B}=\mathcal{A}$, $\chi=\chi_y$, and $M$ specified in the statement of \Cref{thm:AE}. We remark that when running {\QAE} using the function $\chi_y$, that function is only to be applied to the median register of $\ket{\Psi}$ in \Cref{eq:Psi}. Therefore technically, $\chi_y$ is a function on $\{0,1\}^{2n+(c+1)t}$, but it only depends on $t$ of those bits. In \algoAE, we define a threshold $q \approx 0.71\cdot\frac{1-\delta}{N}$ such that (with one exception) $\tilde{p} > q$ if and only if $\lambda_0 < y$. Therefore, by comparing the output $\tilde{p}$ of {\QAE} to $q$, we can determine whether we must increase or decrease $y$. The other parameters $k, M$ are explained in \Cref{app:details} and ensure that {\QAE} uses $O(\sqrt{N})$ applications of $\mathcal{A},\mathcal{A}^\dagger$.

Before discussing the runtime of {\algoAE}, let us sketch our algorithm for preparing the state $\rho$ in \Cref{prob:eig}, and again defer details to \Cref{app:details}. Our overall strategy is to first produce an estimate $\theta_0$ of $\lambda_0$ using {\algoAE}, and then use {\QAA} to project $\ket{\Psi}$ (from \Cref{eq:Psi}) onto the subspace spanned by 
\begin{equation}\label{eq:good_subspace}
    \{\ket{0^{ct}}\ket{x}\ket{\psi_j}\ket{\bar{\psi_j}} : x\approx \theta_0\}.
\end{equation}
Because we have chosen $\delta$ to be sufficiently small, the states that are amplified will mostly be those with $j$ such that $\lambda_j \approx \theta_0$. Additionally, because $\theta_0 \approx \lambda_0$, and the projection of $\ket{\Psi}$ onto this subspace has length $\Omega(\sqrt{\frac{1}{N}})$, by \Cref{thm:AA}, we require only $O(\sqrt{N})$ applications of $\mathcal{A},\mathcal{A}^\dagger$.

Finally, we state the runtimes of {\algoAE} and the preparation of $\rho$. In \Cref{sec:subroutineA}, we stated that $\mathcal{A}$ requires $\tilde{O}(\text{polylog}(N)/\epsilon)$ oracle queries and one- and two-qubit gates. When running {\QAE}, we set $M=O(\sqrt{\frac{N}{1-\delta}}) = O(\sqrt{N})$, hence we require $O(\sqrt{N})$ applications of $\mathcal{A}$ and $\mathcal{A}^\dagger$. Finally, we iterate this $O(\log(1/\epsilon))$ times. The product of these costs gives us the number of queries and gates promised by \Cref{thm:main}. Now we turn to preparing $\rho$. The first stage is to run {\algoAE}, whose complexity we just discussed. In \Cref{app:details}, we show that the overlap $p_\mathrm{g}$ between $\ket{\Psi}$ and the good subspace in \Cref{eq:good_subspace} is in $\Omega(1/N)$, and therefore we again require just $O(\sqrt{N})$ applications of $\mathcal{A}$ and $\mathcal{A}^\dagger$.

\subsection{Extensions and remarks}\label{sec:extensions}
In this section we relax the requirement that eigenvalues lie in the $[0,1]$ interval and extend {\algoAE} to solve a variant of \Cref{prob:eig}. We state this result in the corollary below, with the proof given in \Cref{app:extension_general}.%
\begin{corollary}\label{cor:gen1}
    Let $H$ be a sparse Hermitian matrix, $\epsilon > 0$ an error tolerance, $\gamma <1$ a constant, and $\widetilde{||H||} \geq ||H||$ be an estimate of $||H||$. There exists a quantum algorithm using $\tilde{O}(\tilde{||H||}\sqrt{N}/\epsilon)$ oracle queries and one- and two-qubit gates that approximates $\lambda_0$ up to error $\epsilon$ and prepares a state $\rho$ with $\Tr(\Pi\rho) > \gamma$.
\end{corollary}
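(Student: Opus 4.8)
The plan is to reduce Corollary~\ref{cor:gen1} to Theorem~\ref{thm:main} by a rescaling and shifting argument, so that the only real work is tracking how the transformation affects the oracle cost, the precision, and the spectral interval. First I would set $H' = \alpha H + \beta \mathbb{1}$ for suitable constants $\alpha > 0$ and $\beta$ chosen so that the spectrum of $H'$ lies safely inside $(\epsilon', 1-\epsilon')$ for some rescaled tolerance $\epsilon'$. Since we only have an estimate $\widetilde{||H||} \geq ||H||$, all eigenvalues of $H$ lie in $[-\widetilde{||H||}, \widetilde{||H||}]$, an interval of width $2\widetilde{||H||}$; mapping this into, say, $[1/4, 3/4]$ requires $\alpha = \Theta(1/\widetilde{||H||})$ and a corresponding shift $\beta$. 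This guarantees the ``no wraparound'' hypothesis of Problem~\ref{prob:eig} is met, and it is the reason the runtime picks up the factor $\widetilde{||H||}$: to recover $\lambda_0$ to additive error $\epsilon$ we need $\lambda_0(H')$ to additive error $\epsilon' = \Theta(\epsilon/\widetilde{||H||})$, and Theorem~\ref{thm:main} applied to $H'$ then costs $\widetilde{O}(\sqrt{N}/\epsilon') = \widetilde{O}(\widetilde{||H||}\sqrt{N}/\epsilon)$ queries.

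Next I would verify that the oracle for $H'$ can be simulated with $O(1)$ overhead from the oracle for $H$: the sparsity pattern of $H'$ is the same as that of $H$ except possibly on the diagonal (the shift by $\beta\mathbb{1}$ adds at most one nonzero entry per row, which is absorbed into the $O(\log N)$ sparsity bound), and each queried matrix entry is transformed by the affine map $z \mapsto \alpha z + \beta\,[\text{entry is diagonal}]$, computable classically. I would also note that $\|H'\|_{\max} = O(1)$ is \emph{not} automatically true, but the remark at the end of Section~\ref{sec:subroutineA} already handles non-constant $\|H\|_{\max}$ with a runtime linear in $\|H\|_{\max}$; since $\|H'\|_{\max} \leq \alpha \|H\|_{\max} + |\beta| = O(1)$ whenever $\|H\|_{\max} = O(\widetilde{||H||})$ (which holds, as $\|H\|_{\max} \leq \|H\| \leq \widetilde{||H||}$), this factor is harmless. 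Running {\algoAE} on $H'$ then yields an estimate $\theta_0'$ of $\lambda_0(H')$ with $|\theta_0' - \lambda_0(H')| \leq \epsilon'$, and we output $\theta_0 = (\theta_0' - \beta)/\alpha$, which satisfies $|\theta_0 - \lambda_0(H)| \leq \epsilon'/\alpha = \epsilon$ by the choice of $\epsilon'$.

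For the state $\rho$, I would observe that the eigenvectors of $H'$ coincide with those of $H$, and that the low-energy projector is invariant under the affine rescaling \emph{up to the correspondence of tolerances}: the subspace $\{\ket{\psi_j} : \lambda_j(H') - \lambda_0(H') < \epsilon'\}$ is exactly $\{\ket{\psi_j} : \lambda_j(H) - \lambda_0(H) < \epsilon'/\alpha = \epsilon\}$, which is the subspace $\Pi$ projects onto in the statement of Corollary~\ref{cor:gen1}. Hence the state $\rho$ produced by the {\QAA}-based preparation routine of Section~\ref{sec:usingA} applied to $H'$ already satisfies $\Tr(\Pi \rho) > \gamma$ with no further modification, and the $O(\sqrt{N})$ applications of $\mathcal{A}, \mathcal{A}^\dagger$ carry over, each now costing $\widetilde{O}(\widetilde{||H||}\,\mathrm{polylog}(N)/\epsilon)$.

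The main obstacle I anticipate is bookkeeping the polylogarithmic factors cleanly: the median-trick parameter $\delta = \frac{1}{2N+2}$ and the number of QPE precision bits $t = m + O(1)$ are stated in Section~\ref{sec:alg} in terms of $m = \log(1/\epsilon)$, and after rescaling these become $m' = \log(1/\epsilon') = \log(\widetilde{||H||}/\epsilon) = m + \log\widetilde{||H||}$, so one must check that every ``$O(1)$'' and ``polylog'' in the analysis of Theorem~\ref{thm:main} still absorbs the extra $\log\widetilde{||H||}$ term without changing the leading-order scaling — which it does, since $\widetilde{O}$ already suppresses polylogarithmic factors in both $N$ and $1/\epsilon$, and $\log\widetilde{||H||}$ is itself at most polylogarithmic in the relevant quantities under the standing assumptions. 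A secondary subtlety is that the user supplies only an upper bound $\widetilde{||H||}$ rather than $\|H\|$ exactly; I would point out that overestimating $\|H\|$ only loosens the target interval and inflates the runtime by the honest factor $\widetilde{||H||}/\|H\| \geq 1$, so correctness is never at risk, only efficiency, exactly as the corollary's runtime statement reflects.
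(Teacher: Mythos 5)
Your proposal is correct and follows essentially the same route as the paper: shift $H$ by a multiple of the identity and rescale by $\Theta(1/\widetilde{||H||})$ to land in the setting of \Cref{prob:eig}, run {\algoAE} with precision $\Theta(\epsilon/\widetilde{||H||})$, undo the affine map for the eigenvalue, and use the fact that the eigenvectors (hence the low-energy subspace) are unchanged to apply the {\QAA}-based preparation of $\rho$. Your extra bookkeeping (oracle simulation of $H'$, the $\|H'\|_{\max}$ bound, and the projector correspondence) only makes explicit details the paper leaves implicit.
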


The runtime depends now on $\widetilde{||H||}$. In practice, precisely estimating $||H||$ quickly may be challenging, however estimates $\widetilde{||H||}$ can often be obtained by using a priori knowledge regarding $H$, and generic estimation methods like the Gershgorin Circle Theorem \cite{grasselli2008numerical}. For proving \Cref{cor:gen1}, we transform the problem to \Cref{prob:eig}. This is a matter of performing a linear transformation to the spectrum of $H$, which can be achieved by adding multiples of the identity and multiplying by a constant. However, by doing so, any error incurred on the rescaled matrix is magnified by a factor of $2\tilde{||H||}$ when we scale back to the original spectrum. This fact must be considered in cases where $||H||$ (or $\widetilde{||H||}$) is large. For instance, in the example presented in \Cref{sec:Maxwell_waves}, $||H||$ increases with $N$. In future work, it would be interesting to investigate more advanced methods for shifting and rescaling $H$ such that errors are limited. It might be possible to rescale $H$ nonlinearly.

Finally, we remark that our minimum-finding subroutine is similar to that of D\"{u}rr and H{\o}yer \cite{durr1996quantum}, who demonstrated a square-root speedup for finding the minimum of a function $T$ defined on $\{0,1\}^n$. However, in that work is assumed that it is possible to perfectly prepare the state $\ket{x}\ket{T(x)}$. This is generally infeasible in our case, due to the imprecision of QPE. Additionally, Ref.~\cite{durr1996quantum} employs a different approach to the binary search. They iterate over progressively smaller values of $T(x)$, whereas {\algoAE} iterates over $[0,1]$, finding values $y_i$ which are progressively closer to $\lambda_0$. We believe that our approach is beneficial due to the fact that it is easier to make statements about convergence with respect to the interval $[0, 1]$. In contrast, the convergence of \cite{durr1996quantum} is stated in terms of the image of $T$. In \Cref{app:extension_general}, we also explain how to generalize {\algoAE} to find the eigenvalue of $H$ that is nearest to a user-specified value (instead of simply finding $\lambda_0$).

\section{Applications}\label{sec:applications}

In this section, we showcase {\algoAE} by applying it to two different problems. In \Cref{sec:chemistry}, we determine the ground state energy of the hydrogen molecule $\text{H}_2$ for different atomic distances. Afterwards, we consider a set of partial differential equations (PDEs) in \Cref{sec:Maxwell_waves} and determine their minimum eigenvalue and corresponding eigenvector. In both cases, {\algoAE} allows for determining the desired result up to the chosen accuracy. These examples, albeit proofs of principle, demonstrate the advantages of our approach. Namely, that it deterministically finds the ground state and ground state energy (up to a chosen precision) with fixed and known time and resource complexities. 

We remark that, for specific problems, other classical and/or quantum algorithms may converge faster or with a higher accuracy to the desired result. Specifically, {\algoAE} has a quadratic advantage in $N$ but scales worse in $\epsilon$ compared to classical approaches (see \Cref{sec:Maxwell_waves_results}). On the other hand, it is challenging to compare it to quantum algorithms such as the variational eigensolver \cite{cerezo2021variational,Ferguson2021,chan2023hybrid} and adiabatic simulation techniques \cite{Tameem2018,Elgart2021,Jansen2007}, as their runtimes are problem-dependent and there is generally no certainty in terms of the quality of the obtained solution.

\subsection{The hydrogen atom}\label{sec:chemistry}

In this section, we employ {\algoAE} to determine the ground state energy of the hydrogen molecule H\textsubscript{2} for varying atomic distances. The scope is to showcase two aspects of our approach. First, it always yields the desired result within the specified precision $\epsilon$. This is crucial for more advanced problems where the exact ground state is not available, and as such having estimates with a user-specified error is desirable. We remark that other quantum approaches such as VQE and adiabatic evolution do not generally have access to the error nor insurance on the quality of the output \cite{DePalma2023}.

The second aspect pointed out here is that the runtime of {\algoAE} is completely determined and known beforehand, as explained in \Cref{sec:alg}. This is again relevant in comparison to other quantum algorithms for which the required resources to yield an accurate result are not accessible beforehand. For instance, the runtime of VQE depends on the chosen ansatz state, resource (parametrized) circuit, measurement protocol and classical minimizer \cite{Paulson2021}. These are often challenging to be accurately characterized due to, e.g., barren plateaus \cite{cerezo2021variational,Uvarov2020}. On the other hand, adiabatic approaches are limited by the energy gap between the ground and first excited state of the time-varying Hamiltonian, which is generally unknown and possibly vanishing \cite{Elgart2021,Jansen2007}.

An important remark is that while it can be challenging to estimate the runtimes of VQE and adiabatic evolution, for specific problems they may converge
faster than {\algoAE}. We expect adiabatic evolution to be advantageous when the energy gap between the two smallest eigenvalues is large \cite{Elgart2021,Jansen2007}, and VQE when the desired ground state can be parametrized by a resource circuit with few parameters compared to the number of degrees of freedom of the system \cite{DePalma2023}. 

\begin{figure}[htp]
\centering
\begin{tikzpicture}[scale = 0.85]

\definecolor{darkgray176}{RGB}{176,176,176}

\begin{axis}[
tick align=outside,
tick pos=left,
x grid style={darkgray176},
xlabel={Bond length (\AA)},
xmin=0.195, xmax=2.65,
xtick style={color=black},
y grid style={darkgray176},
ylabel={Ground state energy (Ha)},
ymin=-1.16434910918163, ymax=-0.574129080998769,
ytick style={color=black}
]
\addplot [only marks]
table {%
0.3 -0.60095726409799
0.4 -0.910532640562286
0.5 -1.05703726445931
0.6 -1.11400684688828
0.7 -1.13752092608241
0.8 -1.13547674597398
0.9 -1.12187343787964
1 -1.10244074077586
1.1 -1.08045762420146
1.2 -1.05797911436733
1.3 -1.03639937534023
1.4 -1.01665825114288
1.5 -0.99931905974501
1.6 -0.98462523613751
1.7 -0.972565079108878
1.8 -0.962944082034135
1.9 -0.955457219843257
2 -0.949752800979517
2.1 -0.945481370199179
2.2 -0.942327030607747
2.3 -0.940022685658354
2.4 -0.938353298657687
2.5 -0.9371518593149297
};
\addplot [semithick]
table {%
0.3 -0.6018037107656857
0.31 -0.6458357415434062
0.32 -0.6862795104353605
0.33 -0.7234729786886129
0.34 -0.7577147158832309
0.35 -0.7892693924044143
0.36 -0.8183723740907125
0.37 -0.8452335886335343
0.38 -0.8700407973261428
0.39 -0.8929623781271656
0.4 -0.9141497046270843
0.41 -0.9337391888604558
0.42 -0.9518540428630697
0.43 -0.9686058035926723
0.44 -0.9840956576839244
0.45 -0.9984155960160175
0.46 -1.0116494228720427
0.47 -1.0238736402861794
0.48 -1.0351582247914952
0.49 -1.0455673110333195
0.5 -1.055159794470625
0.51 -1.0639898635485878
0.52 -1.072107470210788
0.53 -1.0795587463646923
0.54 -1.086386372869867
0.55 -1.0926299067451062
0.56 -1.0983260715554028
0.57 -1.1035090153172284
0.58 -1.1082105397311033
0.59 -1.1124603040961543
0.6 -1.1162860068695404
0.61 -1.1197135474938045
0.62 -1.1227671708179052
0.63 -1.125469596176865
0.64 -1.1278421329637447
0.65 -1.129904784322914
0.66 -1.131676340410248
0.67 -1.1331744625019833
0.68 -1.1344157590862745
0.69 -1.135415854939085
0.7 -1.1361894540659225
0.71 -1.136750397283203
0.72 -1.1371117151154644
0.73 -1.1372856765971084
0.74 -1.1372838344885021
0.75 -1.137117067345731
0.76 -1.1367956188204549
0.77 -1.1363291345101947
0.78 -1.1357266966302222
0.79 -1.1349968567347246
0.8 -1.134147666677096
0.81 -1.1331867079665368
0.82 -1.132121119649999
0.83 -1.1309576248253252
0.84 -1.1297025558704243
0.85 -1.1283618784581106
0.86 -1.126941214412047
0.87 -1.1254458634489712
0.88 -1.12388082384415
0.89 -1.1222508120508188
0.9 -1.1205602812999884
0.91 -1.118813439204268
0.92 -1.1170142643875662
0.93 -1.11516652216188
0.94 -1.113273779272859
0.95 -1.1113394177361497
0.96 -1.1093666477880977
0.97 -1.1073585199755278
0.98 -1.1053179364114252
0.99 -1.1032476612245001
1.0 -1.1011503302326195
1.01 -1.099028459871535
1.02 -1.0968844554116517
1.03 -1.0947206184967793
1.04 -1.0925391540401934
1.05 -1.0903421765127632
1.06 -1.0881317156598558
1.07 -1.085909721682727
1.08 -1.0836780699206483
1.09 -1.0814385650690592
1.1 -1.0791929449690747
1.11 -1.0769428840023643
1.12 -1.0746899961245715
1.13 -1.0724358375693206
1.14 -1.0701819092534717
1.15 -1.0679296589128593
1.16 -1.0656804829962414
1.17 -1.0634357283435236
1.18 -1.0611966936728123
1.19 -1.0589646308990266
1.2 -1.056740746305258
1.21 -1.054526201586371
1.22 -1.0523221147826762
1.23 -1.0501295611199792
1.24 -1.047949573770767
1.25 -1.0457831445498011
1.26 -1.0436312245560424
1.27 -1.041494724771481
1.28 -1.0393745166262496
1.29 -1.0372714325382197
1.3 -1.035186266434255
1.31 -1.0331197742593088
1.32 -1.0310726744786842
1.33 -1.029045648577734
1.34 -1.027039341563174
1.35 -1.0250543624687884
1.36 -1.0230912848680698
1.37 -1.0211506473960332
1.38 -1.019232954281587
1.39 -1.0173386758917071
1.4 -1.0154682492882452
1.41 -1.013622078797931
1.42 -1.0118005365958733
1.43 -1.010003963302642
1.44 -1.0082326685948688
1.45 -1.0064869318291225
1.46 -1.004767002678702
1.47 -1.0030731017829186
1.48 -1.00140542140833
1.49 -0.9997641261213248
1.5 -0.9981493534714099
1.51 -0.9965612146845115
1.52 -0.9949997953655102
1.53 -0.9934651562094206
1.54 -0.9919573337202128
1.55 -0.9904763409366258
1.56 -0.9890221681641085
1.57 -0.9875947837120684
1.58 -0.9861941346356116
1.59 -0.9848201474808991
1.6 -0.9834727290331732
1.61 -0.9821517670667529
1.62 -0.9808571310959652
1.63 -0.9795886731261829
1.64 -0.9783462284040639
1.65 -0.9771296161660871
1.66 -0.9759386403844545
1.67 -0.9747730905094947
1.68 -0.9736327422075991
1.69 -0.9725173580938391
1.7 -0.9714266884583405
1.71 -0.9703604719855032
1.72 -0.9693184364651972
1.73 -0.9683002994951055
1.74 -0.9673057691733091
1.75 -0.9663345447803094
1.76 -0.9653863174496992
1.77 -0.9644607708266899
1.78 -0.963557581713767
1.79 -0.962676420702742
1.8 -0.9618169527925812
1.81 -0.9609788379923183
1.82 -0.9601617319085155
1.83 -0.9593652863167003
1.84 -0.9585891497162963
1.85 -0.957832967868579
1.86 -0.9570963843173439
1.87 -0.9563790408918068
1.88 -0.9556805781915433
1.89 -0.9550006360532033
1.9 -0.9543388539987278
1.91 -0.9536948716651354
1.92 -0.9530683292155393
1.93 -0.9524588677315093
1.94 -0.9518661295867381
1.95 -0.9512897588020882
1.96 -0.9507294013821042
1.97 -0.9501847056331403
1.98 -0.9496553224633009
1.99 -0.9491409056643919
2.0 -0.9486411121761864
2.01 -0.948155602333259
2.02 -0.9476840400947564
2.03 -0.9472260932574239
2.04 -0.9467814336523228
2.05 -0.9463497373256127
2.06 -0.9459306847038376
2.07 -0.9455239607441582
2.08 -0.9451292550700428
2.09 -0.9447462620928349
2.1 -0.9443746811197424
2.11 -0.9440142164487292
2.12 -0.9436645774508124
2.13 -0.9433254786403016
2.14 -0.9429966397334844
2.15 -0.9426777856962743
2.16 -0.9423686467813528
2.17 -0.9420689585553064
2.18 -0.9417784619162844
2.19 -0.9414969031026674
2.2 -0.9412240336932627
2.21 -0.9409596105994985
2.22 -0.940703396050122
2.23 -0.9404551575688458
2.24 -0.940214667945422
2.25 -0.9399817052005904
2.26 -0.9397560525453218
2.27 -0.9395374983348027
2.28 -0.9393258360175442
2.29 -0.9391208640800293
2.3 -0.9389223859872748
2.31 -0.9387302101196692
2.32 -0.9385441497064106
2.33 -0.9383640227559833
2.34 -0.9381896519838554
2.35 -0.9380208647377856
2.36 -0.9378574929210381
2.37 -0.9376993729137434
2.38 -0.9375463454926982
2.39 -0.9373982557498383
2.4 -0.9372549530096294
2.41 -0.937116290745592
2.42 -0.9369821264961803
2.43 -0.9368523217801993
2.44 -0.9367267420119304
2.45 -0.9366052564161927
2.46 -0.9364877379434259
2.47 -0.9363740631850049
2.48 -0.936264112288885
2.49 -0.9361577688757258
2.50 -0.9360549199556065
};
\end{axis}
\begin{axis}[
xshift=.15\textwidth,
yshift=3.5cm,
width=0.35\textwidth,
height=0.2\textwidth,
log basis y={2},
tick align=outside,
tick pos=left,
x grid style={darkgray176},
xmin=0.19, xmax=2.61,
xtick style={color=black},
xtick={0.5,1.5,2.5},
y grid style={darkgray176},
ylabel={Error},
ymin = 0.00005, ymax = 0.02,
ymode=log,
ytick style={color=black},
ytick={0.01, 0.001, 0.0001},
yticklabels={
  \(\displaystyle {10^{-2}}\),
  \(\displaystyle {10^{-3}}\),
  \(\displaystyle {10^{-4}}\)
}
]
\addplot [only marks, mark options = {scale = 0.75}]
table {%
0.3 0.000846446667695844
0.4 0.00361706406479767
0.5 0.00187746998868521
0.6 0.00227915998125572
0.7 0.00133147201648343
0.8 0.00132907929688719
0.9 0.00131315657964826
1 0.00129041054324164
1.1 0.0012646792323856
1.2 0.00123836806207622
1.3 0.0012131089059777
1.4 0.00119000185463469
1.5 0.00116970627359936
1.6 0.0011525071043359
1.7 0.00113839065053711
1.8 0.00112712924155378
1.9 0.00111836584452984
2 0.00111168880333123
2.1 0.00110668907943701
2.2 0.00110299691448446
2.3 0.00110029967107883
2.4 0.00109834564805844
2.5 0.0010969393593232
};
\addplot [semithick, dashed]
table {%
0.3 0.01
0.4 0.01
0.5 0.01
0.6 0.01
0.7 0.01
0.8 0.01
0.9 0.01
1 0.01
1.1 0.01
1.2 0.01
1.3 0.01
1.4 0.01
1.5 0.01
1.6 0.01
1.7 0.01
1.8 0.01
1.9 0.01
2 0.01
2.1 0.01
2.2 0.01
2.3 0.01
2.4 0.01
2.5 0.01
};
\end{axis}

\end{tikzpicture}
\caption{\textit{Main plot:} Ground state energy of the hydrogen molecule H\textsubscript{2} for different bond lengths, obtained by employing {\algoAE} with precision $\epsilon = 10^{-2}$ (dots), compared to exact diagonalization (solid curve). \textit{Inset:} The corresponding errors (dots) and the chosen precision $\epsilon = 10^{-2}$ (dashed line).}
\label{fig:h2}
\end{figure}
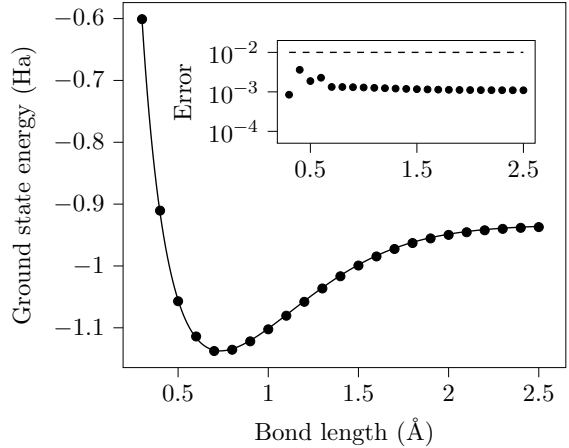
In \Cref{fig:h2} we present our results from determining the ground state energy of the H\textsubscript{2} molecule with different bond lengths. We apply {\algoAE} to the Hamiltonians in Ref.~\cite{mcclean2019openfermion} and calculate the lowest eigenvalues, depicted as dots in the figure. The line comes from exact diagonalization, and is used as a reference. As it is possible to see in the inset, the absolute error (dots) affecting each outcome from {\algoAE} lies well below the chosen threshold of $\epsilon$ (horizontal dashed line). Knowing $\epsilon$, i.e., the \textit{maximum} possible error affecting the estimate of the ground state energy, is of foremost importance for most applications. This is particularly true when considering quantum approaches that may not ensure the correctness of the result or may require a massive experimental budget due to the statistical noise that is intrinsic to quantum mechanics \cite{Shlosberg2023adaptiveestimation}.

\subsection{$1D$ Elasticity}\label{sec:Maxwell_waves}
In this section, we demonstrate how {\QAA} is used to determine a state $\rho$ lying in the low-energy subsector of the system Hilbert space (see \Cref{prob:eig} and \Cref{sec:alg} for more details). Specifically, we consider the elasticity problem as a set of PDEs that can be solved using {\algoAE} alongside {\QAA}. As schematically represented in \Cref{fig:cavity} (top), the goal is to determine the fundamental vibrational mode (bottom left panel of the figure) of a composite string made of two different materials. While the $1D$ case investigated here constitutes a proof of principle, studying the behaviour of (composite) materials is at the basis of countless applications \cite{zienkiewicz2005finite,llorca2011multiscale}, including structural integrity and material simulations. For large systems with complex geometries, the equations are extremely challenging to be solved classically \cite{sukumar2001modeling,agathos2019unified}, making the development of novel approaches such as the one presented here of paramount importance.

As shown in \Cref{fig:cavity} (top), we assume a $1D$ composite string attached to fixed walls. Once expressed in terms of the displacement $\Phi$, the system can be described via the operator $\mathcal{L}$ and its eigenvalues $\lambda$ as
\begin{subequations}\label{eq:PDE}
\begin{align}
    &\mathcal{L}\Phi = \lambda \Phi
    \text{, where }
    \label{eq:PDE_eigenvalue_problem}\\
    &\mathcal{L}\Phi(z) = \frac{1}{\epsilon_r\left( z \right)} \frac{\partial^2}{\partial z^2} \Phi\left( z \right)
    \label{eq:PDE_maxwell}.
\end{align}
\end{subequations}
Here, $z$ is the spatial coordinate, and $\epsilon_r\left( z \right) = \tilde{\epsilon}_r$ (constant) for $z\in[z_0-\frac{d}{2},z_0+\frac{d}{2}]$, otherwise it is equal to one. $\tilde{\epsilon}_r$ is the ratio between the coefficients characterizing the middle and the two outer sections of the string, represented in green and gray, respectively, in \Cref{fig:cavity} (top).

To find the lowest energy eigenvector of the considered system, we employ {\algoAE} to determine an estimator $\tilde{\lambda_\mathcal{L}}$ of $\lambda_\mathcal{L}$, where we formally define
\begin{equation}\label{eq:lambda_def}
    \lambda_\mathcal{L} = \min\{\lambda: \exists \Phi \text{ with } \mathcal{L}\Phi = \lambda\Phi\}.
\end{equation}
We then use {\QAA} to find the associated eigenfunction $\Phi$, which corresponds to the mechanical mode with the smallest frequency between the walls. 

In \Cref{sec:Maxwell_waves_theory} we employ the finite element method \cite{zienkiewicz2005finite} to cast this problem in a form similar to \Cref{prob:eig}. Then, we discuss our numerical results and give a comparison with other approaches in \Cref{sec:Maxwell_waves_results}.

\subsubsection{Semi-analytical solution and discretization of the problem}\label{sec:Maxwell_waves_theory}
To validate the output of {\algoAE}, we first provide a semi-analytical solution for the eigenvalue operator in \Cref{eq:PDE_maxwell}. We start by considering that within the three sectors dividing the string (see \Cref{fig:cavity} (top)), $\epsilon_r\left( z \right)$ is constant and as such the solution is a wave. By imposing the boundary conditions $\Phi(\pm 1) = 0$, continuity at all interfaces and conservation of the stress 
\begin{equation}
\label{eq:stress_cons}
    \lim_{z \downarrow z_0\pm\frac{d}{2}}\frac{\partial{\Phi(z)}}{\partial z}
    = \lim_{z \uparrow z_0\pm\frac{d}{2}}\epsilon_r^{\pm 1}
    \frac{\partial{\Phi(z)}}{\partial z}
\end{equation}
all solutions can be found, provided a transcendental equation for the eigenvalues is numerically solved (hence this method is semi-analytical). 
In the previous \Cref{eq:stress_cons}, with $x \downarrow y$ ($x \uparrow y$) we indicated that the argument $x$ is evaluated in $y$, approaching from the left ($\uparrow$) or right ($\downarrow$). 

To determine $\lambda_\mathcal{L}$ in \Cref{eq:lambda_def} and a quantum state $\rho$ encoding the corresponding solution $\Phi$ of \Cref{eq:PDE_maxwell}, the first step is the discretization of $\mathcal{L}$. There are several approaches to doing so, the most popular of which include finite difference and finite element methods \cite{grasselli2008numerical}. While both are in principle valid, we employ the latter because the boundary conditions in \Cref{eq:stress_cons} are automatically satisfied. Furthermore, in the case where the material geometry is more complex, finite element methods may be preferable because of their flexibility \cite{zienkiewicz2005finite}. We remark that {\algoAE} is applicable in both cases.

Our discretization process begins by selecting a number $N+2$ of grid points $\{ p_i \}_{i=0}^{N+1}$ and the corresponding spacing $h = 2/(N+1)$ to be used. These points are equally spaced between $-1$ and $1$, and the $N$ ones obtained by ignoring the extrema are collected in the set $P=\{p_n = -1 + n h : n =1,\ldots,N\} = \{-1+h, -1+2h, \ldots, 1-2h, 1-h\}$. Since at $p_0 = -1$ and $p_{N+1} = 1$ the boundary conditions fully characterize the solution $\Phi$, $P$ contains all relevant points that must be considered.

Following Ref.~\cite{zienkiewicz2005finite}, we choose a basis of functions to approximate the exact solution $\Phi$. We opt for the tent functions, i.e., piecewise connected segments that are zero everywhere except for a triangular hat of unit height centered at a given grid point and with a width of $2h$. 

With the basis of the tent functions, it is possible to approximate the operator $\mathcal{L}$ in \Cref{eq:PDE_maxwell} with an 
$N\times N$ hermitian matrix $\mathcal{D}$. Its elements $\mathcal{D}_{ij}$ are
\begin{subequations}\label{eq:D_matrix}
\begin{align}
    \mathcal{D}_{ii} & =
    -\frac{h^{-1}}{\epsilon_{r}\left(p_{i-1}+\frac{h}{2}\right)}
    -
    \frac{h^{-1}}{\epsilon_{r}\left(p_{i}+\frac{h}{2}\right)}
    \label{eq:D_matrix_diag},
    \\
    \mathcal{D}_{ij} & = \frac{h^{-1}}{\epsilon_{r}\left(p_{i}+\frac{h}{2}\right)}
    \text{ if }
    |i-j|=1
    \label{eq:D_matrix_offdiag},
\end{align}
\end{subequations}
and $0$ for all other $i,j=1,\ldots,N$. We fix the parameters $z_0$ and $d$ defined below \Cref{eq:PDE_maxwell} such that the coordinates at which the string material changes are included in $P$. While this is not a requirement for finite element methods \cite{llorca2011multiscale}, it simplifies the following derivations.

The matrix defined in \Cref{eq:D_matrix} corresponds to the operator $\mathcal{L}$ in \Cref{eq:PDE_maxwell} in the following sense. For a function $\Phi:[-1,1]\rightarrow \mathbb{R}$, let 
\begin{equation}\label{eq:discretized_func}
    \Vec{\Phi} = (\Phi_1, \ldots, \Phi_N)^T
\end{equation}
be the vector of values of $\Phi$ at the gridpoints in $P$. These can be viewed as the coefficients of the tent functions employed to approximate $\Phi$. Then, provided $h$ is sufficiently small, $\mathcal{D}\vec{\Phi}$ is equivalent to the weak form of the eigenvalue equation in \Cref{eq:PDE_maxwell}. Therefore, $\mathcal{L}\Phi = \lambda_{\mathcal{L}} \Phi$ if and only if $\mathcal{D} \Vec{\Phi} = \lambda_{\mathcal{D}} \Vec{\Phi}$, where $\lambda_{\mathcal{D}}$ and $\lambda_{\mathcal{L}}$ are the corresponding eigenvalues of the operator in the weak and strong form, respectively \cite{zienkiewicz2005finite}. 

In the following section, we first employ {\algoAE} to estimate $\lambda_{\mathcal{D}}$. Then, we apply {\QAA} to find the state $\rho$ that approximates $\rho_{\rm e}$, i.e., the density matrix obtained from the desired eigenvector of the matrix $\mathcal{D}$. As a final remark, we point out that while finite element methods allow for arbitrary boundary conditions \cite{zienkiewicz2005finite}, the form of $\mathcal{D}$ in \Cref{eq:D_matrix} already incorporates the chosen ones.

\subsubsection{Numerical results and comparison with other approaches}\label{sec:Maxwell_waves_results}
In this section, we first present numerical results of {\algoAE} applied to the composite string depicted in \Cref{fig:cavity}, and later on discuss about its performance compared to other approaches. 

In \Cref{fig:cavity} (bottom left) we show the expected $\Phi$ determined with our semi-analytical approach (full line) with the associated eigenvector of $\mathcal{D}$ with $N=2^4$ (dots) and the sine function corresponding to the case $\tilde{\epsilon}_r=1$ (dashed line). The area where the string material is changed is highlighted by the green dotted line. We also report the value $\lambda_{\mathcal{D}} \approx -33.09$ found by exact diagonalization of the matrix $\mathcal{D}$. 

To determine $\rho$, as a first step we must run {\algoAE} to find an estimate $\tilde{\lambda}_{\mathcal{D}} \approx -33.06$. Consistently with the chemistry example in \Cref{sec:chemistry}, the difference $|\tilde{\lambda}_{\mathcal{D}} - \lambda_{\mathcal{D}}|$ is well below the error associated to the $5$ clock qubits we employed. 

Once $\tilde{\lambda}_{\mathcal{D}}$ is known, the last step is to run {\QAA} to determine $\rho$. The output from our algorithm is presented in the two plots in the bottom right of \Cref{fig:cavity}. Specifically, $\rho$ is in the upper one while the difference between $\rho$ and the exact result $\rho_{\rm e}$ (found via exact diagonalization) in the lower. The latter one in particular shows that $\rho$ is, to a high approximation, the desired one, as it is further confirmed by the fidelity of $\approx 99.96\%$. The fidelity can be further enhanced by setting lower values for $\epsilon$.

Above, we demonstrated that {\algoAE} in addition with {\QAA} can determine both the smallest eigenvalue $\lambda_\mathcal{D}$ and, to a high fidelity, the associated state $\Phi$. Below, we provide a qualitative discussion about its performance in comparison to other quantum and classical algorithms. We remark that the PDEs considered here should be taken as a proof of principle for the validity our approach, as there are classical algorithms that outperform {\algoAE}. In fact, due to limitations in the (quantum) computational power, we were not able to test our protocol in scenarios where large values of $N$ and small values of $||\mathcal{D}||$ ensure better performances than any other classical approach.

Other quantum approaches that are specialized in the solution of PDEs are in, e.g., Refs.~\cite{berry2014pde, Childs2021highprecision, papageorgiou2007sturm, papageorgiou2005classical,Kerenidis2020,vanApeldoorn2020quantumsdpsolvers,mande2023tight}. In particular, for certain families of PDEs, the algorithm from Childs et al., \cite{Childs2021highprecision}, presents very good scaling in terms of time complexity. (See also~\cite[Table~1]{Childs2021highprecision}.)
However, we remark that instead of determining the eigenvectors and eigenvalues of an operator, it finds a solution of a PDE. Therefore, it is not immediately applicable to the examples considered in this section. 

A technique of Jaksch and Papageorgiou \cite{jaksch2003eigenvector} is also applicable to the eigenvalue problem, and may be favourable as it only relies on QPE but neither QAE or QAA. The idea is to first find a classical approximation of the desired solution on a coarse grid, that is then used as input to QPE. As long as this coarse grid solution is sufficiently close to the true solution, QPE will give the corresponding eigenvalue with high probability. This approach obviates the need for iterations, as QPE only needs to be executed once (or a few times), and therefore its dependence on $N$ can be polylogarithmic. However, it is also limited by the requirement of rescaling the matrix $\mathcal{D}$ by its norm $||\mathcal{D}||$. Furthermore, the classical solution obtained on a coarse grid that is required as input to these algorithms is only ensured to yield a good starting point for specific kinds of PDEs, such as ones in Sturm-Liouville form \cite{papageorgiou2007sturm, papageorgiou2005classical}. 
For more general problems, it is unclear whether a classically pre-computed coarse grid solution is is sufficiently close to the true solution for this method to work.

For what concerns classical approaches, there exist several that determine the smallest eigenvalue (and the corresponding eigenvector) of a sparse matrix $\mathcal{D}$, including notably the Lanczos algorithm \cite{lanczos1950iteration}, which can run in linear time. While linear-time classical techniques seemingly scale worse than {\algoAE} (whose runtime is in $O(||\mathcal{D}|| \sqrt{N}/\epsilon)$), they are generally polylogarithmic in $1/\epsilon$. Depending on the problem at hand, they may therefore be advantageous if the magnitude of $||\mathcal{D}||/\epsilon$ is sufficiently large in relation to $N$. 

The situation changes for all problem instances where $||\mathcal{D}||/\epsilon$ increases slowly in $N$. {\algoAE} may thus be advantageous in scenarios where a large number of grid points is required to accurately model the system up to fixed error $\epsilon$. In this case, the speedup that our algorithm gives with respect to $N$ may overcome its slower runtime with respect to $\epsilon$. Such scenarios may occur when non-constant coefficients or boundary conditions are particularly complicated, when the desired solution is poorly behaved (in terms of continuity or differentiability), when there is a very large number of equations to be solved, and/or the variable space has many dimensions. 

We remark that in practice, $N$, $\epsilon$, and $\mathcal{D}$ are not independent of one another. When discretizing a PDE, the quantity $N$ and discretization $\mathcal{D}$ are often chosen so as to obtain a specific error $\epsilon$. Or conversely, if the number of grid points and type of discretization $\mathcal{D}$ are selected a priori, then the resulting error $\epsilon$ is completely determined. We believe that finding specific scenarios in which the scaling of our algorithm provides a provable speedup is an exciting open problem.

\begin{figure}
    \centering\includegraphics[scale=0.9]{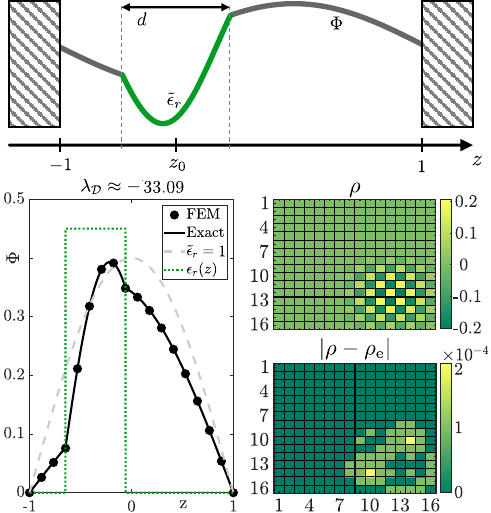}
    \caption{
    \textit{Top:} Schematic representation of the considered system.
    \textit{Bottom left:} Semi-analytical (full line) and finite element method (FEM, dots) solutions, compared with the sinusoidal shape obtained for a homogeneous string ($\tilde{\epsilon}_r = 1$, dashed line). The region where the string material is changed to $\tilde{\epsilon}_r = 5$ is indicated by the green dotted line, and we report the value $\lambda_{\mathcal{D}}$.    
    \textit{Bottom right:} Density matrix $\rho$ found with our approach (above) and its absolute difference with respect to the exact one $\rho_{\rm e}$ found with exact diagonalization of $\mathcal{D}$ (below). 
    We set $N=2^4$, $z_0 = -6/17$, $d=10/17$ and the precision $\epsilon$ within {\algoAE} is set by $5$ qubits.
    }
    \label{fig:cavity}
\end{figure}

\section{Conclusions and future work}
We presented a quantum algorithm for determining the lowest eigenvalue of a Hermitian matrix. Our algorithm makes use of Quantum Phase Estimation and Quantum Amplitude Estimation to achieve a quadratic speedup with respect to the best classical algorithm in terms of matrix dimensionality, requiring just $\tilde{\mathcal{O}}(\sqrt{N}/\epsilon)$ black-box queries to an oracle encoding the matrix, where $N$ is the matrix dimension and $\epsilon$ is the desired precision. In contrast, the best classical algorithm for the same task requires $\Omega(N)$polylog($1/\epsilon$) queries. We present simulations of our algorithm for two problem instances: the Hydrodgen atom and one-dimensional elasticity.

Because of the generality of this algorithm, we believe that there may be many interesting problems to which it could be applied. It would also be interesting to investigate the exact scaling of this algorithm for the two applications from \Cref{sec:applications} (or other ones), in order to delineate the precise scenarios in which a quantum advantage is obtained.

\textit{Authors' note: After developing \Cref{sec:alg}, we became aware of the works in Refs.~\cite{Kerenidis2020,vanApeldoorn2020quantumsdpsolvers}, who outlined the same steps as {\algoAE} as subroutines for quantum gradient descent and for aproximately solving semidefinite programs, respectively. Compared to these works, we focus on determining the smallest eigenvalue and the associated ground state, eliminate residual populations via the median trick and provide various numerical simulations to showcase {\algoAE}.}

\section*{Acknowledgments}
We thank David Gosset for fruitful discussions and comments, and Kostantinos Agathos for his fundamental input on finite element methods and their applications. AK thanks Anne Broadbent for comments regarding the query complexity model. LD acknowledges the EPSRC quantum career development grant EP/W028301/1.

\printbibliography

\appendix
\section{Extension to general sparse Hermitian matrices}\label{app:extension_general}

\begin{proof}[Proof of \Cref{cor:gen1}]
    We first add $\tilde{||H||} I$ to $H$ to ensure all eigenvalues lie in $[0,2\tilde{||H||}]$, and then multiply by $1/2\tilde{||H||}$ to get eigenvalues in $[0,1]$. Let us call the resulting matrix $H'$. Note that $||H'||_\text{max} \leq 1$.

    The desired eigenvalue of $H$ has now been mapped to the lowest eigenvalue of $H'$. We will run this modified version of {\algoAE} using precision $\epsilon/2\tilde{||H||}$, so that when we undo the shift-and-rescale procedure, we find a final error of $\epsilon$ in our estimate of $\min\{\lambda:\lambda\in\text{eigs}(H)\}$. This gives us a total runtime of $\widetilde{O}(\sqrt{N}\tilde{||H||}/\epsilon)$. Next, we must find the corresponding eigenvector. The eigenvectors of $H'$ are the same as the ones of $H$. We may therefore run {\QAA} similarly to what was done in the proof of \Cref{thm:main} to determine $\rho$.
\end{proof}

An interesting extension of {\algoAE}, is that it can be easily modified to find other eigenvalues rather than the smallest. For example, if $H$ has both positive and negative eigenvalues, then we can find the smallest eigenvalue in absolute value. This is done by remarking that the shift-and-rescale procedure described above maps that eigenvalue to the eigenvalue of $H'$ nearest to $1/2$. We can then modify {\algoAE} such that we search for $\min\{\lambda\in\text{eigs}(H'): \lambda \geq 1/2\}$. Doing so \textit{only} requires us to modify the definition of $\chi_y$ so that $\chi_y(x) = 0$ for all bit strings $x$ whose most significant bit is $0$. In other words, we simply ignore any eigenvalues of $H'$ which are less than $1/2$. This will result in $y_{m+1}$ being an approximation of the smallest eigenvalue of $H'$ that is at least $1/2$. Similarly, we can search for the largest eigenvalue of $H'$ that is less than or equal to $1/2$, and then we can compare the two values. We remark that this extension can be generalized to any value that can be represented by the $m$ qubits employed (i.e., the points $k 2^{-m}$ with $k=0,\ldots, 2^m-1$ lying in the $[0,1]$ interval). 

\section{Technical details}\label{app:details}

In this appendix we provide more technical details regarding \algoAE. Let us begin with explaining our choice of $\delta = \frac{1}{2N+2}$.

Due to the inexactness of QPE, the probability of observing $X<\lambda_0 - \epsilon$ when observing the median register is nonzero. How can we distinguish values of $X$ which are small due to $\lambda_0$ being small from values of $X$ which are small due to QPE imprecision? We do so by ensuring that the contributions of the latter are much smaller than the contributions of the former.
For each eigenvalue $\lambda_j$, let $\mathcal{X}_\text{good}^j = \{x:|0.x_1x_2\ldots x_t - \lambda_j|<\epsilon/2\}\subset \{0,1\}^t$, and let $\mathcal{X}_\text{bad}^j = \{0,1\}^t\setminus \mathcal{X}_\text{good}^j$. Formally, we require that $\Pr[X < \lambda_0 - \epsilon/2] \leq \frac{1}{2}\Pr[|X -\lambda_0| < \epsilon/2]$. We achieve this by enforcing the stronger condition that
\begin{equation}
    \Pr\left[X \not\in \bigcup_j \mathcal{X}_\text{good}^j\right] < \frac{1}{2}\Pr\left[X \in \mathcal{X}_\text{good}^0\right],
\end{equation}
which is itself enforced by requiring
\begin{equation}\label{eq:separation}
     \sum_{j=0}^{N-1} \sum_{X\in \mathcal{X}_\text{bad}^j}|\braket{X|\tilde{\lambda_j}}|^2 < \frac{1}{2}\sum_{X\in \mathcal{X}_\text{good}^0} |\braket{X|\tilde{\lambda_0}}|^2.
\end{equation}
The left hand side of this inequality is at most $\delta N$, and the right hand side is at least $\frac{1-\delta}{2}$, hence taking $\delta = \frac{1}{2N+2}$ satisfies the inequality.

Next, let us prove correctness of {\algoAE} by taking $i=m$ in the following lemma (cf. \Cref{fig:convergence}).

\begin{lemma}\label{lem:minfindinginduction}
    For all $i=0,\ldots,m$, we have $|y_i - \lambda_0| \leq 1/2^{i+1} + \epsilon/2$, with $y_i$ as defined in \algoAE.
\end{lemma}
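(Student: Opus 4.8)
The plan is to prove the lemma by induction on $i$. The base case $i=0$ is immediate from the initialization $y_0 = 1/2$: since all eigenvalues lie in $(\epsilon, 1-\epsilon) \subseteq (0,1)$, we have $|y_0 - \lambda_0| = |1/2 - \lambda_0| \leq 1/2 = 1/2^{0+1}$, which is certainly at most $1/2^{0+1} + \epsilon/2$. For the inductive step, I would assume $|y_{i-1} - \lambda_0| \leq 1/2^i + \epsilon/2$ and analyze the single update that produces $y_i$ from $y_{i-1}$. The update adds or subtracts $1/2^{i+1}$ depending on whether $\widetilde{p} > q$, so the essential content is to show that this branch is taken ``correctly'' in the sense needed to shrink the error.

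The key step is to establish the promised behavior of the threshold test in line~\ref{line:6}, namely that (up to one controlled exception) $\widetilde{p} > q$ if and only if $\lambda_0 < y_i$. Concretely, I would argue: the overlap $p_{\rm g}$ that {\QAE} is estimating is $p_{\rm g} = \frac{1}{N}\sum_j \sum_{X : 0.X_1\ldots X_t < y_i} |\braket{X|\tilde\lambda_j}|^2$, i.e.\ $1/N$ times the total weight of median registers falling below $y_i$. Using the concentration guarantee $\Pr[|X - \lambda_j| \leq \epsilon/2] \geq 1-\delta$ built into $\mathcal{A}$, together with $\delta = \frac{1}{2N+2}$, one shows that if $\lambda_0 < y_i - \epsilon/2$ then $p_{\rm g} \geq \frac{1-\delta}{N}$, whereas if $\lambda_0 > y_i$ (so $\lambda_j > y_i$ for all $j$, since $\lambda_0$ is the minimum) then $p_{\rm g} \leq \frac{\delta N}{N} = \delta \leq \frac{1}{2N+2}$, which is much smaller. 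Feeding these bounds through the {\QAE} error estimate \eqref{eq:AEerror} with $M = \sqrt{kN/(1-\delta)}$ and $k = 537$, the estimate $\widetilde{p}$ is separated cleanly by the threshold $q = \frac{1-\delta}{N}\bigl(\frac12 + \frac{\sqrt2\pi}{\sqrt k} + \frac{\pi^2}{k}\bigr)$: in the first case $\widetilde{p} > q$ and in the second $\widetilde{p} < q$. The ``one exception'' is the intermediate regime $y_i - \epsilon/2 \leq \lambda_0 \leq y_i$, where the test may go either way; but in that regime one simply checks directly that both possible updates $y_i = y_{i-1} \pm 1/2^{i+1}$ keep $|y_i - \lambda_0| \leq 1/2^{i+1} + \epsilon/2$, because $y_{i-1}$ is within $1/2^i + \epsilon/2$ of $\lambda_0$ and we are moving by half that interval.

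Granting the threshold behavior, the inductive step is a short interval bisection argument: the inductive hypothesis says $\lambda_0 \in [y_{i-1} - 1/2^i - \epsilon/2,\ y_{i-1} + 1/2^i + \epsilon/2]$, and (outside the exceptional regime) the test tells us which half of the length-$1/2^i$ interval centered near $y_{i-1}$ contains $\lambda_0$ to within the $\epsilon/2$ slack; moving $y_{i-1}$ by $\pm 1/2^{i+1}$ toward that half yields $|y_i - \lambda_0| \leq 1/2^{i+1} + \epsilon/2$. Taking $i = m$ gives $|y_m - \lambda_0| \leq 1/2^{m+1} + \epsilon/2 \leq \epsilon$, as claimed in the main text. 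I expect the main obstacle to be the careful bookkeeping in the threshold analysis: one must track the $\epsilon/2$ ``fuzziness'' from QPE imprecision simultaneously with the $O(1/M)$ additive error from {\QAE}, verify that the specific constants $k = 537$, $q$, and $M$ make the two cases provably separated (this is where \eqref{eq:separation} and the choice of $\delta$ are used), and correctly handle the boundary regime so that no update can ever overshoot. The bisection geometry itself is routine once that separation is in place.
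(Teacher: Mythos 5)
Your overall route is the same as the paper's: induction on $i$, with the inductive step resting on a quantitative separation statement (this is exactly the paper's \Cref{lem:thresholds}, obtained by pushing the bounds $p_{\rm g}\ge\frac{1-\delta}{N}$ and $p_{\rm g}\le\delta<\frac{1-\delta}{2N}$ through \Cref{thm:AE} with $k=537$, $M=\sqrt{kN/(1-\delta)}$ and the threshold $q$), plus a triangle-inequality fallback in a fuzzy regime. The genuine gap is in your characterization of when the test is reliable. You claim that $\lambda_0>y_i$ already forces $p_{\rm g}\le\delta$ and hence $\tilde{p}<q$. That step fails: the concentration guarantee built into $\mathcal{A}$ only says $\Pr[X_j<\lambda_j-\epsilon/2]\le\delta$, so when $y\in(\lambda_0-\epsilon/2,\lambda_0)$ the median register can place a constant fraction (up to about half, e.g.\ when $\lambda_0$ lies between two grid points of the $t$-bit clock) of the $j=0$ weight below $y$, giving $p_{\rm g}$ of order $\frac{1}{2N}$, which exceeds $\frac{1-\delta}{2N}$; then neither half of \Cref{lem:thresholds} applies and QAE may well report $\tilde{p}>q$, moving $y$ \emph{downward} even though $y<\lambda_0$. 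Hence the exceptional band must be two-sided, $|y_{i-1}-\lambda_0|\le\epsilon/2$, not your one-sided $y_{i-1}-\epsilon/2\le\lambda_0\le y_{i-1}$; the reliably decided cases are $y_{i-1}<\lambda_0-\epsilon/2$ (where indeed $p_{\rm g}\le\delta$) and $y_{i-1}>\lambda_0+\epsilon/2$ (where $p_{\rm g}\ge\frac{1-\delta}{N}$). This is precisely how the paper splits its induction, with the two-sided band as its final case.

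The fix is within reach of what you already wrote: in the missing regime $\lambda_0-\epsilon/2<y_{i-1}<\lambda_0$ the same direct check covers both branches, since $|y_i-\lambda_0|\le|y_i-y_{i-1}|+|y_{i-1}-\lambda_0|\le 1/2^{i+1}+\epsilon/2$. Relatedly, your stated justification for the exceptional case --- that $y_{i-1}$ is within $1/2^i+\epsilon/2$ of $\lambda_0$ and you move by half that interval --- would not suffice on its own: moving the wrong way from distance $1/2^i+\epsilon/2$ gives $1/2^i+1/2^{i+1}+\epsilon/2$. What makes the case work is the much tighter bound $|y_{i-1}-\lambda_0|\le\epsilon/2$ defining the band; with that stated explicitly, and the band widened to both sides, your argument coincides with the paper's proof.
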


\begin{proof}[Proof]
    In order to prove the theorem, we need a technical lemma, which quantifies the possible values of $\widetilde{p}$ when {\QAE} is used on $\mathcal{A}$ and a function of the form $\chi_y$.
    \begin{lemma}\label{lem:thresholds}
    Let $k\geq 537, M = \sqrt{\frac{kN}{1-\delta}}$, and
    \begin{equation}\label{eq:threshold}
        q = \frac{1-\delta}{N}\left(\frac{1}{2} + \frac{\sqrt{2}\pi}{\sqrt{k}} + \frac{\pi^2}{k}\right).
    \end{equation}
    For any $\mathcal{B}, \chi, p_\mathrm{g}, \widetilde{p_\mathrm{g}}$ as in \Cref{thm:AE}, if $p_\mathrm{g} < \frac{1-\delta}{2N}$ then $\widetilde{p_\mathrm{g}} < q$, and if $p_\mathrm{g} > \frac{1-\delta}{N}$ then $\widetilde{p_\mathrm{g}} > q$.
\end{lemma}

Informally, this lemma is telling us that whenever $y \not\in [\lambda_0-\epsilon, \lambda_0 + \epsilon]$, {\QAE} with parameters $\mathcal{A}, \chi_y, M$ can reliably determine whether $y<\lambda_0$ or $y >\lambda_0$. In \Cref{fig:QAE_error}, we illustrate the fact that even though {\QAE} will make some amount of error, because of our stipulation that $\delta = \frac{1}{2N+2}$, 
its output will always lie on the correct side of the threshold value $q$, provided $y \notin [\lambda_0 - \epsilon, \lambda_0 + \epsilon]$. In the case where $y \in [\lambda_0-\epsilon, \lambda_0 + \epsilon]$, we have no guarantee as to whether $\tilde{p_\mathrm{g}}<q$ or $\tilde{p_\mathrm{g}}>q$, and so there is no guarantee as to whether $y_i$ will be obtained by adding $1/2^{i+1}$ from $y_{i-1}$ or subtracting $1/2^{i+1}$. However we will see shortly that in this case $y_{i-1}$ is close enough to $\lambda_0$ that it doesn't matter. 

\begin{figure}[htp]
    \centering
    \hspace*{-0.45cm}
    \begin{subfigure}{\linewidth}
    \centering
    \begin{tikzpicture}[scale=8, every node/.style={align=center}] 
    \draw[->, thick] (0,0) -- (1,0);
  
    \draw (0.0,0.025) -- + (0,-0.05) node[below] {$0$}; 
    \draw (0.3,0.025) -- + (0,-0.05) node[below] {$\frac{1-\delta}{2N}$}; 

    \draw (0.6,0.025) -- + (0,-0.05) node[below] {$\frac{1-\delta}{N}$}; 
    
    \draw (0.2,0.025) -- + (0,-0.05) node[below] {$p_1$}; 
    \draw (0.45, 0.025)  -- + (0, -0.05) node[below] {$q$};

    \draw[thick, decorate,decoration={brace,amplitude=2pt}]
(0.05,0.05) -- (0.35,0.05) node[above, xshift=-1.2cm, yshift=0.1cm] {Possible values of $\Tilde{p_1}$};
\end{tikzpicture}
    \end{subfigure}
    \\~\\~\\
    \hspace*{-0.45cm}
    \begin{subfigure}{\linewidth}
    \centering
    \begin{tikzpicture}[scale=8, every node/.style={align=center}] 
    \draw[->, thick] (0,0) -- (1,0);
  
    \draw (0.0,0.025) -- + (0,-0.05) node[below] {$0$}; 
    \draw (0.3,0.025) -- + (0,-0.05) node[below] {$\frac{1-\delta}{2N}$}; 

    \draw (0.6,0.025) -- + (0,-0.05) node[below] {$\frac{1-\delta}{N}$}; 
    
    \draw (0.7,0.025) -- + (0,-0.05) node[below] {$p_1$}; 
    \draw (0.45, 0.025)  -- + (0, -0.05) node[below] {$q$};

    \draw[thick, decorate,decoration={brace,amplitude=2pt}]
(0.5,0.05) -- (0.9,0.05) node[above, xshift=-1.6cm, yshift=0.1cm] {Possible values of $\Tilde{p_1}$};
\end{tikzpicture}
    \end{subfigure}
    \caption{Illustration of the fact that $p_1 < \frac{1-\delta}{2N}\Rightarrow \Tilde{p_1}<q$ and $p_1 > \frac{1-\delta}{N}\Rightarrow \Tilde{p_1} > q$ when QAE is run with the parameters specified in the main text. In each figure, we show a possible value of $p_1$ together with the range of values of $\Tilde{p_1}$ that may be output by QAE.}
    \label{fig:QAE_error}
\end{figure}
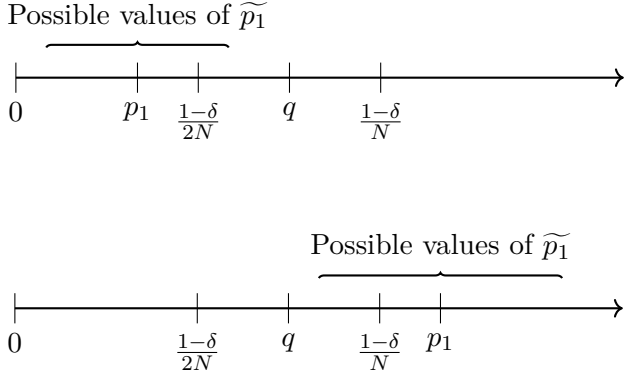

\begin{proof}[Proof of \Cref{lem:thresholds}]
    If $p_\mathrm{g} \leq \frac{1-\delta}{2N}$ we find (from \Cref{thm:AE})

    \begin{align}
        \widetilde{p_\mathrm{g}} &\leq p_\mathrm{g} + 2\pi\frac{\sqrt{p_\mathrm{g}(1-p_\mathrm{g})}}{M} + \frac{\pi^2}{M}
        \\
        & \leq \frac{1-\delta}{2N} + 2\pi\sqrt{\frac{1-\delta}{2N}\frac{1-\delta}{kN}} + \pi^2 \frac{1-\delta}{kN}
        \\
        & = \frac{1-\delta}{N}\left(\frac{1}{2} + \frac{\sqrt{2}\pi}{\sqrt{k}} + \frac{\pi^2}{k}\right),
    \end{align}
which proves the first half of the lemma. Now suppose $p_\mathrm{g} \geq \frac{1-\delta}{N}$. Let us write $p_\mathrm{g} = \alpha\frac{1-\delta}{N}$ for some $\alpha\geq 1$. We then find
    \begin{align}
        \widetilde{p_\mathrm{g}} &\geq \alpha\frac{1-\delta}{N} - 2\pi \sqrt{\alpha\frac{1-\delta}{N}\frac{1-\delta}{kN}} - \pi^2 \frac{1-\delta}{kN}\\
        &= \frac{1-\delta}{N}\left(\alpha - \frac{2\pi \sqrt{\alpha}}{\sqrt{k}} - \frac{\pi^2}{k}\right)\\
        &\geq \frac{1-\delta}{N}\left(1 - \frac{2\pi}{\sqrt{k}} - \frac{\pi^2}{k}\right)\\
        &\geq \frac{1-\delta}{N}\left(\frac{1}{2} + \frac{\sqrt{2}\pi}{\sqrt{k}} + \frac{\pi^2}{k}\right),
    \end{align}
    where the last inequality holds because $k\geq 537$.
\end{proof}

We will prove \Cref{lem:minfindinginduction} by induction on $i$. The base case is trivial: We always have $\lambda_0 \in [0, 1]$, hence $|y_0 - \lambda_0| \leq 1/2 < 1/2^1 + \epsilon/2$. Now we consider the inductive step. Suppose that $i \geq 1$ and that $|y_{i-1} - \lambda_0| \leq 1/2^i + \epsilon/2$, and let us now consider $y_i$. We will consider several cases separately.

    \begin{enumerate}
    \item First, suppose
    $y_{i-1} \geq \lambda_0 + \epsilon/2 + 1/2^{i+1}$. Then by \Cref{lem:thresholds}, we will set $y_i = y_{i-1} - 1/2^{i+1}$, and by the induction hypothesis,
    \begin{align}
       |y_i - \lambda_0| &= y_i - \lambda_0 \\
       &= y_{i-1} - 1/2^{i+1} - \lambda_0 \\ 
       &\leq \epsilon/2 + 1/2^i - 1/2^{i+1}\\
       &= \epsilon/2 + 1/2^{i+1},
    \end{align}
    as desired. The case $y_{i-1} \leq \lambda_0 - \epsilon/2 - 1/2^{i+1}$ is similar.
    \item Suppose next that $\lambda_0+\epsilon/2 < y_{i-1}< \lambda_0 + \epsilon/2 + 1/2^{i+1}$. Then by \Cref{lem:thresholds} we will again set $y_i = y_{i-1} - 1/2^{i+1}$. If $i=m$ then we will have $y_i \in [\lambda_0, \lambda_0-\epsilon/2]$ and we are done. Otherwise, we will have $y_i < \lambda_0$ and particularly, 
    \begin{align}
        |y_i - \lambda_0| &= \lambda_0-y_i\\
        &< y_{i-1} - y_i\\
        &< 1/2^{i+1},
    \end{align}
    as desired. The case $\lambda_0 - \epsilon/2 - 1/2^{i+1} < y_{i-1} < \lambda_0 - \epsilon/2$ is similar.
    \item The final case is that in which $|\lambda_0 - y_{i-1}| \leq \epsilon/2$. In this case, \Cref{lem:thresholds} gives us no guarantee as to whether $y_i$ will be  $y_{i-1} - 1/2^{i+1}$ or $y_{i-1} + 1/2^{i+1}$. But since $y_{i-1}$ is so close to $\lambda_0$ to begin with it turns out not to matter:
    \begin{align}
        |y_i - \lambda_0| &\leq |y_i - y_{i-1}| + |y_{i-1} - \lambda_0|\\
        & \leq 1/2^{i+1} + \epsilon/2.
    \end{align}
    \end{enumerate} 
\end{proof}

We now provide further details of our algorithm for preparing $\rho$.

As stated in \Cref{sec:alg}, our goal will be to find an estimate $\theta_0$ of $\lambda_0$, and then to project $\ket{\Psi}$ onto the space spanned by
\begin{equation}
    \{\ket{0^{ct}}\ket{x}\ket{\psi_j}\ket{\bar{\psi_j}} : x\approx \theta_0\}.
\end{equation}
We have not yet stipulated how close $\theta_0$ must be to $\lambda_0$, nor how close we will require $x$ and $\theta_0$ to be. If we require $x$ to be too close to $\theta_0$, then this projection could be a vector of very short length, which would cause {\QAA} to require a very large number of iterations. On the other hand, if we don't require that $x$ is very close to $\theta_0$, then states of the form $\ket{0^{ct}}\ket{x}\ket{\psi_j}\ket{\bar{\psi_j}}$ with $\lambda_j \not\approx \lambda_0$ will be amplified, which is undesirable.

With this in mind, we will first employ {\algoAE} to find $\theta_0$ such that $|\theta_0 - \lambda_0| < \epsilon/4$. Next, we use {\QAA}, setting $\mathcal{B} = \mathcal{A}$ in \Cref{thm:AA}, and using $\epsilon/4$ for the precision of the QPE component of $\mathcal{A}$. When running {\QAA}, we set $\chi$ to be the indicator function $\chi(x) = 1$ if and only if $|x - \theta_0|<\epsilon/2$.
 
Ignoring the $c$ uncomputed clock registers, let us write the resulting state as
\begin{equation}\label{eq:PhiforRho}
    \ket{\Phi} = \frac{\sum_{j=0}^{N-1} a_j\ket{\mu_j}\ket{\psi_j}\ket{\bar{\psi_j}}}{\sqrt{\sum_{j=0}^{N-1} |a_j|^2}},
\end{equation}
where $\frac{1-\delta}{N}\leq\sum_j |a_j|^2 \leq 1$, and $\ket{\mu_j}$ is a superposition of strings $x$ with $|x - \theta_0|<\epsilon/2$.

We now claim that tracing out the first and third register of $\ket{\Phi}$ results in a state $\rho$ satisfying $\Tr(\Pi \rho) > 2/3$, with $\Pi$ as in \Cref{prob:eig}. We refer the reader to \Cref{fig:QAA_numberline} as a supplement to the following argument. First, we deduce that $|a_0|^2\geq \frac{1-\delta}{N}$. This follows from the facts that (i) $[\lambda_0 - \epsilon/4, \lambda_0 + \epsilon/4] \subset [\theta_0-\epsilon/2, \theta_0+\epsilon/2]$; and (ii) we have set the precision of the QPE component of $\mathcal{A}$ to $\epsilon/4$. Second, for all $j$ such that $|\lambda_j - \lambda_0| > \epsilon$, we have $[\lambda_j -\epsilon/4, \lambda_j + \epsilon/4]\cap [\theta_0 - \epsilon/2, \theta_0 + \epsilon/2] = \emptyset$, which (recalling that QPE was run with precision $\epsilon/4$) implies $|a_j|^2 \leq \delta$. Due to our choice of $\delta = \frac{1}{2N+2}$ (see \Cref{eq:separation}), the ratio of $|a_0|^2$ to the sum of all such $|a_j|^2$ is at least $2$. If there are no $j\neq 0$ such that $|\lambda_j - \lambda_0| < \epsilon$, then we must therefore have $\Tr(\Pi \rho) > 2/3$. If there do exist any such $j$, then $\Tr(\Pi \rho)$ can only increase.

\begin{figure}
\centering
\begin{tikzpicture}[scale=8]
\draw[<->, dotted, thick] (0.05,0) -- (1,0);
\draw (0.125,0) -- + (0, -0.02) node[below, yshift = -0.1cm, label={[scale=0.8]center:$\lambda_0 - \frac{\epsilon}{4}$}] {};
\draw (0.25,0) -- + (0, -0.02) node[below, yshift = -0.1cm, label={[scale=0.8]center:$\lambda_0$}] {};
\draw (0.375,0) -- + (0, -0.02) node[below, yshift = -0.1cm, label={[scale=0.8]center:$\lambda_0 + \frac{\epsilon}{4}$}] {};
\draw (0.625,0) -- + (0, -0.02) node[below, yshift = -0.1cm, label={[scale=0.8]center:$\lambda_0 + \frac{3\epsilon}{4}$}] {};
\draw (0.75,0) -- + (0, -0.02) node[below left, yshift = -0.1cm, label={[scale=0.8]center:$\lambda_0 + \epsilon$}] {};
\draw[thick, red] (0.35,0) -- + (0, +0.02) node[above, yshift = +0.2cm, label={[black, scale=0.8]center:$\theta_0$}] {};
\draw[thick, red] (0.35 - 0.25, 0) -- (0.35 + 0.25, 0); 
\draw[thick, blue] (0.8,0) -- + (0, +0.02) node[above, yshift = +0.2cm, label={[black, scale=0.8]center:$\lambda_j$}] {};
\draw[thick, blue] (0.8 - 0.125, 0) -- (0.8 + 0.125, 0);
\end{tikzpicture}
\caption{On this number line, we illustrate several quantities relating to our algorithm for preparing $\rho$. The red interval represents $[\theta_0-\epsilon/2, \theta_0+\epsilon/2]$. Notice that it completely covers the interval $[\lambda_0 - \epsilon/4, \lambda_0 + \epsilon/4]$. Because $\mathcal{A}$ is run with precision $\epsilon/4$, the proportion of $\ket{\Tilde{\lambda_0}}$ lying in the red interval is at least $\frac{1-\delta}{N}$, and so $\ket{\psi_0}$ will be amplified . We also illustrate some other eigenvalue $\lambda_j$ with $j\neq 0$, and in blue, the interval $[\lambda_j - \epsilon/4, \lambda_j + \epsilon/4]$, in which most of $\ket{\Tilde{\lambda_j}}$ lies. Because $\lambda_j > \lambda_0 + \epsilon$, the blue and red intervals are disjoint, which implies that $\ket{\psi_j}$ will not be amplified by very much.}
\label{fig:QAA_numberline}
\end{figure}
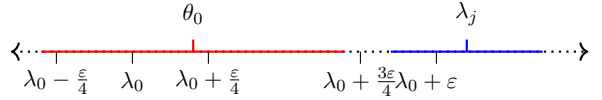

\section{Implementation}\label{app:implementation}
We have implemented a simulation of the algorithms of \Cref{sec:alg} using Qiskit \cite{Qiskit}. The code implementation is available at \url{https://www.github.com/softwareQinc/EigenvalueFinding}~\cite{Kerzner_EigenvalueFinding_2023}. The code uses several simplifications: (1) We prepare the state $\frac{1}{\sqrt{N}}\sum_j\ket{\psi_j}$ directly, instead of preparing $\frac{1}{\sqrt{N}}\sum_j\ket{\psi_j}\ket{\bar{\psi_j}}$, which allows us to use $n$ fewer qubits; (2) for simplicity, we do not use the median trick; (3) we use iterative amplitude estimation \cite{grinko2021iterative} instead of the original {\QAE} algorithm; (4) when rescaling the matrix involved in \Cref{sec:Maxwell_waves}, we use knowledge of $\lambda_0$ and the matrix norm in order to use fewer qubits; and (5) when performing {\QAA}, instead of the usual approach \cite{Brassard_2002} of trying various different powers of the Grover iterate until we obtain the right measurement outcome, we simply perform postselection.

\end{document}